\newcommand{\SSCMk}{{\mathbf{S}_{\text{sgn},k}}}
\newcommand{\SSCMpop}{{\boldsymbol \Sigma}_\text{sgn}}
\newcommand{\SSCMshapepop}{{\boldsymbol \Lambda}_\text{sgn}}
\newcommand{\SSCMshapepopk}{{\boldsymbol \Lambda}_{\text{sgn},k}}
\newcommand{\SSCMshape}{{\hat{\boldsymbol \Lambda}}}
\newcommand{\shape}{{\boldsymbol \Lambda}}
\newcommand\M{{\boldsymbol \Sigma}}
\renewcommand\S{{\mathbf S}}
\newcommand{\LINPOOLI}{\tilde{\mathbf{S}}}
\newcommand{\LINPOOL}{{\mathbf{S}}}
\newcommand{\aLB}{\epsilon} % identity lowerbound
\newcommand{\aLBk}{\epsilon_k} % identity lowerbound for class k 
\newcommand{\lambdamax}{\lambda_{\max}}
\newcommand{\MATLAB}{{MATLAB\textsuperscript{\tiny\textregistered}}~}
\renewcommand\u{{\mathbf u}}
\newcommand\e{{\mathbf e}}
\newcommand\w{{\mathbf w}}
\newcommand\X{{\mathcal X}}
\newcommand\x{{\mathbf x}}
\newcommand\m{{\mathbf m}}
\newcommand\bmu{{\boldsymbol \mu}}
\newcommand\I{{\mathbf I}}
\newcommand\T{{\mathbf T}}
\newcommand\A{{\mathbf A}}
\renewcommand\a{{\mathbf a}}
\newcommand\B{{\mathbf B}}
\newcommand\C{{\mathbf C}}
\newcommand\cv{{\mathbf c}}
\newcommand\Q{{\mathbf Q}}
\newcommand\f{{\boldsymbol{\eta}}}
\newcommand\D{{\boldsymbol\Delta}}
\newcommand\one{{\boldsymbol 1}}
\newcommand\zero{{\boldsymbol 0}}
\newcommand\real{{\mathbb{R}}} % real numbers
\newcommand\complex{{\mathbb{C}}} % complex numbers
\newcommand\Fro{{\mathrm{F}}}
\DeclareMathOperator{\E}{\mathbb{E}}
\DeclareMathOperator{\var}{\mathrm{var}}
\DeclareMathOperator{\tr}{tr}
\DeclareMathOperator{\diag}{diag}
\renewcommand{\vec}{\mathrm{vec}}
\newcommand{\argmin}{\operatornamewithlimits{arg~min\ }}
\newcommand{\hop}{\mathsf{H}}
\newcommand{\NMSE}{\mathrm{NMSE}}
\pgfplotsset{compat=1.13}
\newcommand\numberthis{\addtocounter{equation}{1}\tag{\theequation}}
\newcommand\blue[1]{{#1}}
\theoremstyle{remark}
\newcounter{ctheorem}
\newtheorem{theorem}[ctheorem]{Theorem}
\newcounter{cremark}
\newtheorem{remark}[cremark]{Remark}
\newcounter{clemma}
\newtheorem{lemma}[clemma]{Lemma}
\newcounter{cproposition}
\newtheorem{proposition}[cproposition]{Proposition}
\newcommand{{\papertitle}}{Linear pooling of sample covariance matrices}
\newcommand{{\keywords}}{covariance matrix, elliptical distribution,
high-dimensional, multiclass, regularization, shrinkage, spatial sign
covariance matrix}
\begin{document}
\title{\papertitle}
\author{Elias~Raninen,~\IEEEmembership{Student~Member,~IEEE,}
David~E.~Tyler,
Esa~Ollila,~\IEEEmembership{Senior~Member,~IEEE}%
\thanks{E. Raninen and E. Ollila are with the Department of Signal
Processing and Acoustics, Aalto University, P.O. Box 15400, FI-00076
Aalto, Finland. D. E. Tyler is with the Department of Statistics, Rutgers, The
State University of New Jersey, Piscataway NJ 08854, U.S.A. 

The research work by E. Raninen and E. Ollila was supported in part by the
Academy of Finland under Grant 298118. Research work by D. E. Tyler was
supported in part by the National Science Foundation under Grant DMS-1812198.}}%

\maketitle
\begin{abstract} % 100 to 150 words
    We consider the problem of estimating high-dimensional covariance matrices
    of $K$-populations or classes in the setting where the sample sizes are
    comparable to the data dimension.  We propose estimating each class
    covariance matrix as a distinct linear combination of all class sample
    covariance matrices. This approach is shown to reduce the estimation error
    when the sample sizes are limited, and the true class covariance matrices
    share a somewhat similar structure. We develop an effective method for
    estimating the coefficients in the linear combination that minimize the mean
    squared error under the general assumption that the samples are drawn from
    (unspecified) elliptically symmetric distributions possessing finite
    fourth-order moments. To this end, we utilize the spatial sign covariance
    matrix, which we show (under rather general conditions) to be an
    asymptotically unbiased estimator of the normalized covariance matrix as the
    dimension grows to infinity. We also show how the proposed method can be
    used in choosing the regularization parameters for multiple target matrices
    in a single class covariance matrix estimation problem. We assess the
    proposed method via numerical simulation studies including an application in
    global minimum variance portfolio optimization using real stock data.
\end{abstract}
\begin{IEEEkeywords}
    \keywords
\end{IEEEkeywords}
\IEEEpeerreviewmaketitle

\section{Introduction}
\label{sec:intro}

\IEEEPARstart{H}{igh-dimensional} covariance matrix estimation is a challenging
problem as the dimension $p$ of the observations can be much larger than the
sample size $n$.  Such problems are increasingly common, for example, in
finance~\cite{ledoitImprovedEstimationCovariance2003},
genomics~\cite{schaferShrinkageApproachLargeScale2005}, graphical
models~\cite{zhangMultivariateGeneralizedGaussian2013},
\blue{chemometrics~\cite{engelOverviewLargedimensionalCovariance2017}}, wireless
sensor networks~\cite{yilunchenRobustShrinkageEstimation2011}, and adaptive
filtering in array signal
processing~\cite{bessonRegularizedCovarianceMatrix2013}. This paper considers a
high-dimensional problem, where there are $K$ distinct classes (populations).
Since the population variables are in general the same, but are measured under
different population conditions, it is reasonable to presume the $K$ distinct
covariance matrices share some common features or structure.  In the small
sample size setting, it is then advantageous to leverage on this presumption in
the estimation of the population covariance matrices.

Consider $K$ mutually independent classes, where each class $k \in \{1,\ldots,K\}
$ consists of independent and identically distributed (i.i.d.) $p$-dimensional
observations $\X_k = \{\x_{1,k}, \ldots, \x_{n_k,k} \}$ of size $n_k$ with mean
$\bmu_k = \E[\x_{i,k}]$ and positive definite symmetric covariance matrix
\begin{equation}\label{eq:Sigma}
    \M_k = \E[(\x_{i,k} - \bmu_k)(\x_{i,k} - \bmu_k)^\top].
\end{equation}
The ordinary estimators for the covariance matrix and the mean are the sample
covariance matrix (SCM)
\begin{equation}\label{eq:SCM}
    \S_k = \frac{1}{n_k-1} \sum_{i=1}^{n_k}
    (\x_{i,k} - \bar \x_k)(\x_{i,k} - \bar \x_k)^\top,
\end{equation}
and the sample mean $\bar \x_k = \frac{1}{n_k} \sum_{i=1}^{n_k} \x_{i,k}$.  When
the sample size $n_k$ and the data dimensionality $p$ are comparable in size,
the SCM can be highly variable, resulting in an unstable estimate of the
population covariance matrix. Also, the SCM is positive definite only if $n_k >
p$ and $\X_k$ spans $\real^p$. Due to these problems, a commonly used approach
in high-dimensional covariance matrix estimation is to use regularization
(shrinkage).

In the one population case ($K=1$), linear regularization usually refers to
estimating the covariance matrix by a linear or convex combination of the SCM
(or some other primary estimator) and a (usually positive definite) target
matrix. Multiple examples can be found in
finance~\cite{ledoitImprovedEstimationCovariance2003,
ledoitWellconditionedEstimatorLargedimensional2004,
ledoitHoneyShrunkSample2004,ollilaOptimalShrinkageCovariance2019},
genomics~\cite{schaferShrinkageApproachLargeScale2005}, and signal
processing~\cite{chenShrinkageAlgorithmsMMSE2010,
liEstimationLargeCovariance2018,steinerFastConvergingAdaptive2000,
abramovichRegularizedCovarianceMatrix2013,
aubryGeometricApproachCovariance2018,demaioLoadingFactorEstimation2019}.
The target matrix is chosen based on prior knowledge or assumptions about the
true population covariance matrix. Sometimes using more than one target matrix
can further reduce the estimation error. In the double shrinkage approach
of~\cite{ikedaComparisonLinearShrinkage2016}
and~\cite{halbeRegularizedMixtureDensity2013} there are two convex
regularization steps: the SCM is first regularized toward a diagonal matrix
consisting of the sample variances after which the resulting estimator is
further regularized toward a scaled identity matrix. Recently, also
multi-target shrinkage methods have been proposed that are able to incorporate a
larger number of simultaneous target
matrices~\cite{lancewickiMultitargetShrinkageEstimation2014,
bartzMultitargetShrinkage2014, tongLinearShrinkageEstimation2018,
riedlMultimodelShrinkageKnowledgeaided2018, grayShrinkageEstimationLarge2018,
zhangImprovedCovarianceMatrix2019}. 

In the multiple population setting, regularization via pooling the information
in the different class samples is also possible. For
example,~\cite{bessonMaximumLikelihoodCovariance2020} considered covariance
matrix estimation from two independent data sets, whose covariance matrices are
different but close to each other. This type of problems are encountered in
radar processing as well as in hyperspectral imaging applications, where
additional data sets may have been acquired with slightly different measurement
configurations~\cite{bessonMaximumLikelihoodCovariance2020}. In the context of
wireless sensor networks,~\cite{colucciaRobustOpportunisticInference2016}
considered linear parameter estimation from independent and non-identically
distributed scalar sample statistics. In discriminant analysis classification,
the pooled SCM, $\S_{\text{pool}} = \frac{1}{n}\sum_{k=1}^K n_k \S_k$, $n =
\sum_{k=1}^K n_k$, is often used as a shrinkage target and the class covariance
matrices are estimated via a convex combination $\hat \M_k = a \S_k + (1-a)
\S_{\text{pool}}$, where $a \in [0,1]$. This was studied in a Bayesian framework
in~\cite{greenePartiallyPooledCovariance1989}
and~\cite{rayensCovariancePoolingStabilization1991}, and in the
\emph{Regularized Discriminant Analysis} (RDA) framework
in~\cite{friedmanRegularizedDiscriminantAnalysis1989}. Furthermore, the optimal
tuning parameter for this setting under elliptically distributed data was
derived in our earlier work~\cite{raninenOptimalPoolingCovariance2018,
raninenCoupledRegularizedSample2021}. For applications using RDA see,
e.g.,~\cite{krasoulisUseRegularizedDiscriminant2017,
luRegularizedDiscriminantAnalysis2003}

As noted previously, at least some of the $K$ population covariance matrices can
be similar (close to each other in terms of suitable distance metric) and so
it would be beneficial to use regularization to reduce the variance of the final
estimates of the covariance matrices. Following this idea, we propose to
estimate each class covariance matrix as a nonnegative linear combination of the
SCMs of all classes. For $\a \geq \zero$, define
\begin{equation}
    \LINPOOL(\a) = \sum_{i=1}^K a_{i} \S_i.
    \label{eq:linearcombination}
\end{equation}
Restricting the coefficients to be nonnegative ensures that the
estimator~\eqref{eq:linearcombination} is positive semidefinite.  The aim is to
find a $K \times K$ nonnegative coefficient matrix $\A = (a_{ij}) =
\begin{pmatrix}
    \a_1 & \cdots & \a_K
\end{pmatrix} \geq \zero$,
that minimizes the total mean squared error (MSE),
\begin{align}
    \A^\star &= \argmin_{\A \geq \zero}
    \sum_{k=1}^K \E \big[ \|\LINPOOL(\a_k) - \M_k \|^2_\Fro \big]
    \label{eq:problem}
    \\
    \Leftrightarrow
    \a_k^\star &= \argmin_{\a \geq \zero}
    \E \big[ \|\LINPOOL(\a) - \M_k \|^2_\Fro \big], ~ k=1,\ldots,K,
    \label{eq:ak}
\end{align}
with the estimate of $\M_k$ then taken to be $\hat \M_k = \LINPOOL(\a_k^\star)$.
The equivalence of~\eqref{eq:problem} and~\eqref{eq:ak} is evident from the fact
that the optimization problems for each class are separable. The solution to
this problem is given in Section~\ref{sec:pooledestimator}. It is easy to see
that the RDA based estimators form a subset of the more general
form~\eqref{eq:linearcombination}, which permits using individual weights for
each SCM in the sum. 

Below we summarize the main contributions of the paper.
\begin{itemize}
    \item We propose covariance matrix estimators for multiclass problems,
        based on linearly pooling the class SCMs. Several aspects and properties
        of the estimator are discussed including possible modifications and
        an extension for complex-valued data.
    \item We show how the optimal linear coefficients can be estimated by
        assuming that the data is elliptically distributed. To this end, we use
        the spatial sign covariance matrix (SSCM), which we show under rather
        general assumptions to be asymptotically unbiased with respect to
        growing dimension.
    \item We show how the estimator can be used as a multi-target shrinkage
        estimator in a single class problem.
    \item Numerical simulations are conducted including a portfolio optimization
        problem using real stock data. The simulations show promising
        performance of the proposed estimator compared to competing estimators.
    \item Code is available at
        \href{https://github.com/EliasRaninen}{github.com/EliasRaninen}, which
        works both for real and complex-valued data sets.
\end{itemize}

The rest of the paper is organized as follows. In
Section~\ref{sec:pooledestimator}, we derive the optimal coefficients for the
linear pooling estimator and study some of its properties. In
Section~\ref{sec:estimation}, we propose methods for estimating the statistical
parameters needed for the estimation of the optimal coefficients. This section
also presents the results regarding the SSCM. In Section~\ref{sec:extensions},
we discuss possible extensions and modifications to the estimator. In
Section~\ref{sec:multitarget}, we discuss the similarities and differences
between the proposed method and closely related multi-target shrinkage
covariance matrix estimation methods. Furthermore, we show how our proposed
method can be used as a multi-target shrinkage covariance matrix estimator in a
single class problem with arbitrary positive semidefinite target matrices.
Section~\ref{sec:simulations} provides numerical simulation studies and
Section~\ref{sec:portfolio} provides an application to investment portfolio
selection using historical stock data.  Lastly, Section~\ref{sec:conclusions}
concludes.

\emph{Notation:}
Matrices are denoted by upper case boldface letters ($\A$ or $\D$), vectors are
denoted by lower case boldface letters ($\a$ or $\boldsymbol \delta$), and
scalars are denoted by lower case letters ($a$ or $\delta$). For a matrix $\A =
(a_{ij})$, the notation $\A \geq \zero$ means that the matrix is nonnegative,
that is, $a_{ij} \geq 0$, for all $i$ and $j$.  Similarly, for a vector $\a =
(a_i)$ the notation $\a \geq \zero$ means that $a_i \geq 0$, for all $i$. The
notation $\A \succ \zero$ ($\A \succeq \zero)$ means that $\A$ is positive
definite (positive semidefinite). The notation $\diag(\a)$ denotes a diagonal
matrix with the entries of $\a$ on the main diagonal. The identity matrix is
denoted by $\I$ and the vector of all ones is denoted by $\one$.  The notation
$\e_i$ denotes the $i$th Euclidean basis vector, i.e., a vector whose $i$th
coordinate is 1 and all other coordinates are 0. For real sequences $a_p$ and
$b_p$, as $p \to \infty$, the notation $a_p = o(b_p)$ means that the sequence
$a_p/b_p \to 0$, and the notation $a_p = O(b_p)$ means that the sequence
$a_p/b_p$ is bounded.  For a matrix-valued sequence $\A_p$, we write $\A_p =
o(b_p)$ and $\A_p = O(b_p)$ if and only if $\|\A_p\|_\Fro = o(b_p)$ and
$\|\A_p\|_\Fro = O(b_p)$, respectively. The Frobenius norm is defined as
$\|\A\|_\Fro = \sqrt{\tr(\A^\top \A)}$ while $\| \cdot \|$ denotes the
Euclidean norm for vectors. The largest eigenvalue of $\A$ is
$\lambdamax(\A)$. The determinant of $\A$ is denoted by $|\A|$.

\section{Linear pooling of SCMs} \label{sec:pooledestimator}
In this section, we address solving for the coefficients of the linear
combination of SCMs in \eqref{eq:linearcombination}. First, define the
\emph{scaled MSE} of the SCM $\S_k$ as
\begin{equation*}
    \delta_k
    = p^{-1}\mathrm{MSE}(\S_k) = p^{-1}\E[ \| \S_k - \M_k \|_\Fro^2]
\end{equation*}
and the scaled inner products of the covariance matrices as
\begin{equation*}
    c_{ij} = p^{-1} \tr(\M_i \M_j),
\end{equation*}
\blue{where $p$ is the dimension of the data.}
Then define
\begin{equation*}
    \D = \diag(\delta_1,\ldots,\delta_K)
    ~\text{and}~
    \C = \begin{pmatrix} 
            \cv_1 \cdots \cv_K 
         \end{pmatrix} 
         = (c_{ij}).
\end{equation*}
We can then state the following result.

\begin{theorem}\label{thm:MSE}
    (\emph{The MSE of the linearly pooled estimator}) For class $k$, the MSE
    in \eqref{eq:ak} can be written as the strictly convex quadratic function
    \begin{equation*}
        p ( \a^\top (\D + \C) \a - 2 \cv_k^\top \a + c_{kk} ),
    \end{equation*}
    where $\D + \C$ is a positive definite symmetric matrix.
\end{theorem}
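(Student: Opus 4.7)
The plan is to expand the Frobenius norm squared via the trace and exploit the independence across classes to reduce the expectation to a quadratic form in $\a$, then read off the matrix of the quadratic form and verify its positive definiteness.

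First I would write
\begin{equation*}
\E\bigl[\|\LINPOOL(\a)-\M_k\|_\Fro^2\bigr]
= \E\Bigl[\tr\Bigl(\sum_{i=1}^K a_i \S_i - \M_k\Bigr)^2\Bigr]
\end{equation*}
and expand the square as $\sum_{i,j} a_i a_j \tr(\S_i\S_j) - 2\sum_i a_i \tr(\S_i\M_k) + \tr(\M_k^2)$. Taking expectations term by term, the cross terms satisfy $\E[\tr(\S_i\M_k)] = \tr(\E[\S_i]\M_k) = \tr(\M_i\M_k) = p\, c_{ik}$, since the SCM is unbiased for $\M_k$.

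Second, I would handle the quadratic term by splitting the diagonal and off-diagonal contributions. For $i \neq j$, independence of the classes yields $\E[\tr(\S_i\S_j)] = \tr(\E[\S_i]\E[\S_j]) = \tr(\M_i\M_j) = p\, c_{ij}$. For $i = j$, I would apply the standard bias--variance decomposition: since $\E[\S_i]=\M_i$, the matrices $\S_i-\M_i$ and $\M_i$ are Frobenius-orthogonal in expectation, so
\begin{equation*}
\E[\tr(\S_i^2)] = \E[\|\S_i-\M_i\|_\Fro^2] + \|\M_i\|_\Fro^2 = p\,\delta_i + p\,c_{ii}.
\end{equation*}
Substituting these identities back and collecting terms gives
\begin{equation*}
\E\bigl[\|\LINPOOL(\a)-\M_k\|_\Fro^2\bigr]
= p\bigl(\a^\top \D \a + \a^\top \C \a - 2\cv_k^\top \a + c_{kk}\bigr),
\end{equation*}
which is the claimed expression.

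Finally, I would verify positive definiteness of $\D+\C$. The matrix $\C$ is the Gram matrix of $\M_1,\ldots,\M_K$ under the (scaled) Frobenius inner product $\langle \A,\B\rangle = p^{-1}\tr(\A\B)$, hence $\C\succeq\zero$. The diagonal matrix $\D$ has strictly positive diagonal entries because $\delta_k = p^{-1}\E[\|\S_k-\M_k\|_\Fro^2] > 0$ whenever $\S_k$ has nondegenerate sampling distribution, which holds under the working assumptions of the paper. Adding a positive definite matrix to a positive semidefinite one yields $\D+\C\succ\zero$, and the strict convexity of the quadratic form follows. There is no real obstacle here; the only point requiring care is the accounting that properly separates the $i=j$ term (which produces $\delta_i$) from the $i\neq j$ terms (which, by class independence, factor into a product of means), and recognizing that $\C$ is a Gram matrix to get positive semidefiniteness for free.
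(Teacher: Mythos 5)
Your proof is correct and follows essentially the same route as the paper: expand the squared Frobenius norm, use independence of the classes for the $i \neq j$ terms, use the decomposition $\E[\tr(\S_i^2)] = \mathrm{MSE}(\S_i) + \tr(\M_i^2)$ for the diagonal terms, and establish positive definiteness of $\D + \C$ by noting $\C$ is a (scaled Frobenius) Gram matrix, hence positive semidefinite, while $\D$ is diagonal with strictly positive entries. No gaps; the paper's verification that $\m^\top \C \m = \tr(\M_*^2) \geq 0$ is exactly your Gram-matrix observation written out.
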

\begin{proof}
    See Appendix \ref{app:thm:MSE}.
\end{proof}
As a consequence of Theorem \ref{thm:MSE}, the optimal coefficients can be
computed in the following way.
\begin{proposition}
    (\emph{Optimal nonnegative coefficients})
    The solution to \eqref{eq:ak}, $\a_k^\star$, is found by solving the
    strictly convex quadratic programming (QP) problem
    \begin{equation} \label{eq:AQP}
	\begin{array}{ll}
	    {\text{minimize}}
	    & \frac{1}{2} \a^\top (\D + \C) \a - \cv_k^\top \a \\
	    \text{subject to} & \a \geq \zero.
	\end{array}
    \end{equation}
\end{proposition}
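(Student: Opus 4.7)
The plan is to obtain the proposition as an essentially immediate corollary of Theorem \ref{thm:MSE}, so I would not attempt any fresh calculation; instead I would just transport the objective from the theorem into the QP form.

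First I would substitute the closed-form expression from Theorem \ref{thm:MSE} for the MSE into the definition of $\a_k^\star$ in \eqref{eq:ak}, giving
\[
    \a_k^\star = \argmin_{\a \geq \zero}\; p\bigl( \a^\top (\D+\C)\a - 2\cv_k^\top \a + c_{kk}\bigr).
\]
Then I would note that $p > 0$ is a fixed positive scalar and $c_{kk} = p^{-1}\tr(\M_k \M_k)$ does not depend on $\a$, so both can be dropped without affecting the argmin. After dividing by the positive constant $2p$, the minimization reduces to
\[
    \a_k^\star = \argmin_{\a \geq \zero}\; \tfrac{1}{2}\a^\top (\D+\C)\a - \cv_k^\top \a,
\]
which is exactly the program stated in \eqref{eq:AQP}.

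To justify calling this a strictly convex QP, I would invoke the second part of Theorem \ref{thm:MSE}, which asserts that $\D+\C$ is symmetric positive definite; hence the quadratic objective has a positive-definite Hessian $\D+\C$, making it strictly convex, while the feasible set $\{\a : \a \geq \zero\}$ is a closed convex polyhedral cone. Strict convexity over a nonempty closed convex set guarantees that the minimizer $\a_k^\star$ exists and is unique, and the resulting problem falls into the standard strictly convex QP template solvable by any off-the-shelf QP routine.

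There is no real obstacle here; the only thing to be careful about is bookkeeping the multiplicative constants ($p$ and the factor $2$) and the additive constant $c_{kk}$ when passing from the MSE expression in Theorem \ref{thm:MSE} to the canonical QP form with leading $\tfrac{1}{2}$, so that the coefficient of the linear term matches $-\cv_k^\top \a$ exactly.
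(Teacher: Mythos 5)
Your proposal is correct and follows exactly the paper's route: the paper's proof is simply ``Follows from Theorem~\ref{thm:MSE},'' and your argument spells out that substitution, the dropping of the positive factor $p$ and the additive constant $c_{kk}$, the rescaling by $2$, and the strict convexity from $\D+\C \succ \zero$. Nothing is missing.
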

\begin{proof}
    Follows from Theorem~\ref{thm:MSE}.
\end{proof}
Many efficient algorithms exist for solving constrained convex
QPs~\cite{nocedalNumericalOptimization2006}.\footnote{In~\cite{kozlovPolynomialSolvabilityConvex1980}
    it was shown that using the ellipsoidal method, the strictly convex QP can
    be solved in polynomial time. In the simulations, we use the \MATLAB
function \texttt{quadprog}, which uses an interior-point method.} The
optimization problem~\eqref{eq:AQP} requires knowledge of the matrices $\C$ and
$\D$, which depend on the unknown population parameters. We can nonetheless estimate
the solution by using estimates $\hat \C$ and $\hat \D$, which can be computed
from the data as explained in Section~\ref{sec:estimation}.

It is instructive to consider an unconstrained version of the optimization
problem~\eqref{eq:AQP}, where the weights are allowed to take negative values.
For this case, we have the following closed form solution.
\begin{proposition}\label{prop:unconstrained}
    (\emph{Optimal unconstrained coefficients})
    The unconstrained solution, which minimizes the MSE in \eqref{eq:problem}
    is
    \begin{align}
        \a_k^\star = (\D + \C)^{-1} \cv_k
        \Leftrightarrow
        \A^\star = (\D + \C)^{-1} \C.
        \label{eq:unconstrained}
    \end{align}
\end{proposition}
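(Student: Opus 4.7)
The plan is very short because the heavy lifting has already been done in Theorem~\ref{thm:MSE}. By that theorem, the per-class objective in~\eqref{eq:ak} is the strictly convex quadratic
\[
f_k(\a) \;=\; p\bigl(\a^\top (\D + \C) \a - 2\,\cv_k^\top \a + c_{kk}\bigr),
\]
and $\D + \C$ is positive definite. Strict convexity together with the fact that $\real^K$ is an open set means the unique unconstrained minimizer is the stationary point.

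First I would compute the gradient
\[
\nabla f_k(\a) \;=\; 2p\bigl((\D+\C)\a - \cv_k\bigr),
\]
and set it to zero, which immediately yields $\a_k^\star = (\D+\C)^{-1} \cv_k$. The invertibility of $\D+\C$ is guaranteed by Theorem~\ref{thm:MSE}, so no further regularity argument is needed.

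Next, to obtain the matrix form, I would stack the $K$ optimality conditions columnwise. Since $\A^\star = (\a_1^\star\ \cdots\ \a_K^\star)$ and $\C = (\cv_1\ \cdots\ \cv_K)$, concatenation of the equations $(\D+\C)\a_k^\star = \cv_k$ for $k=1,\ldots,K$ gives $(\D+\C)\A^\star = \C$, and hence $\A^\star = (\D+\C)^{-1}\C$.

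There is essentially no obstacle here: the only thing to double-check is that the equivalence between~\eqref{eq:problem} and~\eqref{eq:ak} continues to hold when the nonnegativity constraint is dropped, which is immediate since the summands are still separable in $\a_1,\ldots,\a_K$. So the proof reduces to one gradient computation plus a bookkeeping step for the matrix form.
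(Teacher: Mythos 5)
Your argument is correct and is exactly the route the paper takes: its proof of this proposition is simply ``Follows from Theorem~\ref{thm:MSE},'' and your gradient computation $\nabla f_k(\a) = 2p\bigl((\D+\C)\a - \cv_k\bigr) = \zero$ together with the positive definiteness of $\D+\C$ and the columnwise stacking is the standard way to fill in that one-line proof. No gaps.
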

\begin{proof} 
    Follows from Theorem \ref{thm:MSE}.
\end{proof}
Note that if the closed form solution $\a_k^\star \geq \zero$
in~\eqref{eq:unconstrained}, then it is also the solution to~\eqref{eq:AQP}.

Consider the single class case, for which the problem reduces to finding an
optimal scaling parameter $a_1$ such that $\E \big[\| a_1 \S_1 - \M_1 \|^2_\Fro
\big]$ is minimized.  Proposition~\ref{prop:unconstrained} above then states
that the optimal parameter that minimizes the MSE is
\begin{equation*}
    a_1^\star = \frac{c_{11}}{\delta_1 + c_{11}} =
    \frac{1}{\mathrm{NMSE}(\S_1) + 1},
\end{equation*}
where $\mathrm{NMSE}(\S_1) = \mathrm{MSE}(\S_1)/\|\M_1\|_{\Fro}^2$ is referred to
as the normalized MSE (NMSE).  It can easily be shown that
$\mathrm{MSE}(a_1^\star \S_1) = a_1^\star\mathrm{MSE}(\S_1) <
\mathrm{MSE}(\S_1)$ since $a_1^\star < 1$. Therefore, the (oracle) estimator
$\hat \M_1 = a_1^\star \S_1$ is always more efficient than $\S_1$.  For the
univariate normal case, one obtains $a_1^\star = (n_1-1)/(n_1+1)$. This result
was first obtained in~\cite{goodmanSimpleMethodImproving1953}. A corresponding
result for the general (non-normal) univariate case was obtained
in~\cite{searlsNoteEstimatorVariance1990}, and it can be written as $a_1^\star =
((n_1+1)/(n_1-1) + 3\kappa_1/n_1)^{-1}$, where $\kappa_1$ is the symmetric
kurtosis of the population (see~\eqref{eq:kappa_k}).

Consider next the special case when all population covariance matrices are
equal, i.e., $\M_1 = \cdots = \M_K \equiv \M$. In this case, $\C = c\one
\one^\top$ with $c = \tr(\M^2)/p$. Using the Sherman-Morrison formula, $(\D +
c\one \one^\top)^{-1} = \D^{-1} - \mu \D^{-1} \one \one^\top \D^{-1}$, where
$\mu = c/(1+ c\one^\top \D^{-1} \one)$, we obtain the
solution~\eqref{eq:unconstrained} as
\begin{equation*}
    \A^\star = \mu \D^{-1}\one \one^\top.
\end{equation*}
Hence, all columns of $\A^\star$ coincide, and the coefficients in each column
are $a_{jk}^\star = \mu/\delta_j$. These coefficients can be written in an
equivalent form
\begin{equation*}
    a_{jk}^\star = \frac{ \mathrm{NMSE}(\S_j)^{-1}}{1 + \sum_{i=1}^K
    \mathrm{NMSE}(\S_i)^{-1}}.
\end{equation*}
Thus, the weights are positive and proportional to the inverses of the NMSE of
the SCMs. If $\S_j$ has a large NMSE relative to others, which occurs for
example when the sample size $n_j$ is small relative to others, then the
weight $a_{jk}^\star$ assigned for $\S_j$ is small. This implies that the
contribution of $\S_j$ in the linear combination $\LINPOOL(\a_k^\star)$ is
small.

If one further assumes that all populations have the same distribution and the
sample sizes are equal, then $\delta_1 = \cdots = \delta_K \equiv \delta$ and
\begin{equation*}
    a^\star \equiv  a_{jk}^\star = \frac{1}{\mathrm{NMSE}(\S_j) + K} <
    \frac{1}{K} \quad \forall j,k.
\end{equation*}
That is, the pooled SCM is shrunk by the factor $K a^\star < 1$.

Due to the positivity of the coefficients, the conclusions of these special
cases naturally also hold for the constrained case, where the coefficients are
constrained to be nonnegative.

\section{Estimation}\label{sec:estimation}
In this section, we address the estimation of $\D$ and $\C$. We review
elliptically symmetric distributions, introduce the relevant statistical
parameters as well as show how to estimate them. Regarding an estimate for the
sphericity parameter, we use the SSCM for which we then prove an asymptotic
unbiasedness result in Theorem~\ref{thm:sign}.

\subsection{Elliptically symmetric distributions}
We will assume that the samples are generated from unspecified elliptically
symmetric distributions with finite fourth-order moments. That is, an absolutely
continuous random vector $\x \in \real^p$ from the $k$th population is assumed
to have a density function up to a constant of the form
\begin{equation*}
    |\M_k|^{-1/2} g_k ( (\x - \bmu_k)^\top \M_k^{-1} (\x - \bmu_k) ),
\end{equation*}
where $g_k:\real_{\geq 0} \to \real_{> 0}$ is called the \emph{density
generator}~\cite{bilodeauTheoryMultivariateStatistics1999}. Here, we let $g_k$
to be defined so that $\M_k$ represents the covariance matrix of $\x$, which
means that $C_k^{-1} \int_0^\infty t^{p/2} g_k(t) \mathrm{d} t = p$, where
$C_k = \int_0^\infty t^{p/2-1} g_k(t) \mathrm{d} t$. 
We write $\x \sim \mathcal E_p(\bmu_k,\M_k,g_k)$ to denote this case. For
example, the multivariate normal (MVN) distribution is a particular instance of
the elliptical distribution for which $g_k(t) = \exp(-t/2)$. We write $\x \sim
\mathcal N_p(\bmu_k,\M_k)$ to denote this case. An elliptically distributed
random vector $\x \sim \mathcal E_p(\bmu_k,\M_k,g_k)$ can be expressed by the
\emph{stochastic representation} as
\begin{equation}\label{eq:stochasticrepresentation}
    \x = \bmu_k + r_k \M_k^{1/2} \mathbf{u},
\end{equation}
where $r_k$ is a random variable called the \emph{modular variate},  verifying
$\E[r_k^2]=p$, and $\mathbf{u}$ is a random vector distributed uniformly on the
unit sphere, i.e., $\mathbf{u} \in \{\mathbf{z} \in \real^p :
\|\mathbf{z}\|=1\}$.  Furthermore, $\mathbf{u}$ and $r_k$ are independent.  More
generally, we note that any random vector $\x$ which
satisfies~\eqref{eq:stochasticrepresentation} is said to have an elliptical
distribution, even if it is not absolutely continuous, i.e., does not have a
density. The relationship between the modular variate $r_k$ and $\x$ is readily
seen from~\eqref{eq:stochasticrepresentation} to be $r_k^2 = (\x - \bmu_k)^\top
\M_k^{-1} (\x - \bmu_k)$.

Sometimes we are only interested in the covariance matrix up to a scaling
constant. Hence, we define the \emph{shape matrix}:
\begin{equation*}
    \shape_k = p \frac{\M_k}{\tr(\M_k)}, 
\end{equation*}
which verifies $\tr(\shape_k)=p$.  Additionally, we define three statistical
parameters that describe the elliptical distribution. First, we define the
\emph{scale}:
\begin{equation}\label{eq:eta_k}
    \eta_k = \tr(\M_k)/p,
\end{equation}
which is equal to the mean of the eigenvalues of $\M_k$. Note that, $\M_k =
\eta_k \shape_k$. Second, we define the \emph{sphericity}:
\begin{equation}\label{eq:gamma_k}
    \gamma_k = \frac{p \tr(\M^2_k)}{\tr(\M_k)^2} = \frac{\tr(\shape_k^2)}{p},
\end{equation}
which equals the ratio of the mean of the squared eigenvalues relative to the
mean of the eigenvalues squared. The sphericity parameter gets values in the
range $[1,p]$ and attains its minimum for the scaled identity matrix and its
maximum for a rank one matrix. Third, we define the \emph{elliptical kurtosis}:
\begin{equation} \label{eq:kappa_k}
    \kappa_k = \frac 1 3 \mathrm{kurt}(x_i) = \frac 13 \left(
    \frac{\E[(x_i-\mu_i)^4]}{\E[(x_i-\mu_i)^2]^2} - 3 \right),
\end{equation}
where $x_i$ denotes any marginal variable of $\x = (x_i) \sim \mathcal
E_p(\bmu_k,\M_k,g_k)$ and $\mu_i = \E[x_i]$. For example, if the sample is from
a MVN distribution, then $\kappa_k=0$. The kurtosis parameter also satisfies
$\kappa_k = \E[r_i^4]/(p(p+2))- 1 $, and hence, not only does $\kappa_k$
represent the kurtosis of each of the variables $x_i$, but it also represents
the kurtosis of any univariate linear combination $\mathbf{b}^\top \x$, where
$\mathbf{b} \in \real^p \setminus \{\zero\}$. The lower bound for the kurtosis
parameter is $\kappa^{\text{LB}} =
-2/(p+2)$~\cite{bentlerGreatestLowerBound1986}.

For elliptical populations, the scaled MSE $\delta_k$ of the SCM obtains an
explicit form \cite[Lemma~1]{ollilaOptimalShrinkageCovariance2019}:
\begin{equation}
    \delta_k = \eta_k^2 \Big(
    \Big(\frac{1}{n_k - 1} + \frac{\kappa_k}{n_k}\Big)(p + \gamma_k)
    + \frac{\kappa_k}{n_k} \gamma_k \Big),
    \label{eq:delta_k}
\end{equation}
which depends on the known sample size $n_k$ as well as the unknown scale
$\eta_k$ \eqref{eq:eta_k}, the unknown sphericity $\gamma_k$ \eqref{eq:gamma_k},
and the unknown elliptical kurtosis $\kappa_k$ \eqref{eq:kappa_k}.

\subsection{Estimates of the scale and the elliptical kurtosis}
We estimate the scale using the SCM via 
\begin{equation}\label{eq:etahat}
    \hat \eta_k = \tr(\S_k)/p.
\end{equation}
The kurtosis $\hat \kappa_k$ is estimated via the (bias-corrected) average
sample kurtosis of the marginal variables,
\begin{align*}\label{eq:kappahat}
    &
    \hat \kappa_k
    =
    \frac{N}{3} ( (n_k+1) g_2 + 6 ),
    \numberthis
    \\
    &
    g_2
    = 
    \frac{1}{p} \sum_{j=1}^p
    \frac{\frac{1}{n_k}\sum_{i=1}^{n_k} (x_{ij,k} - \bar x_{j,k})^4}
    {\left(\frac{1}{n_k}\sum_{i=1}^{n_k} (x_{ij,k}- \bar x_{j,k})^2\right)^2} - 3,
    \\
    &
    N = \frac{n_k-1}{(n_k-2)(n_k-3)}
    ,
\end{align*}
where $\bar x_{j,k} = \frac{1}{n_k}\sum_{i=1}^{n_k}
x_{ij,k}$~\cite{joanesComparingMeasuresSample1998}. In case \eqref{eq:kappahat}
is less than $\kappa^{\text{LB}}$, we set $\hat \kappa_k = 0.99
\kappa^{\text{LB}}$.  Note that, although $\kappa_k$ is invariant under affine
transformations of $\x$, the estimator $\hat \kappa_k$ in \eqref{eq:kappahat} is
not. An alternative estimator of  $\kappa_k$ that is affine equivariant is 
\[ 
    \tilde \kappa_k =
    \frac{1}{n_k} \sum_{i=1}^{n_k} \frac{ [  (\x_{i,k} - \bar \x_k)^\top
    \S_k^{-1} (\x_{i,k} - \bar \x_k)]^2}{p (p+2)}- 1. 
\] 
This estimator requires  $n_k >  p$, and hence we use $\hat \kappa_k$.

\subsection{Estimate of the sphericity using the SSCM}\label{sec:signestimators}
Regarding the sphericity, it would be natural to develop an estimator using the
SCM as well. However, a simple and particularly well performing estimator of the
sphericity is based on the robust \emph{spatial sign covariance matrix} (SSCM)
\eqref{eq:SSCM} and it has been used, e.g., in
\cite{zouMultivariateSignbasedHighdimensional2014, chengTestingEqualityTwo2019,
zhangAutomaticDiagonalLoading2016},
and~\cite{ollilaOptimalShrinkageCovariance2019}. Particularly,
in~\cite{ollilaOptimalShrinkageCovariance2019}, both a SCM and a SSCM based
estimator of the sphericity was compared and, except for the case where the
samples were MVN, the simulations suggested the superiority of the SSCM based
sphericity estimator. Before introducing the estimator for the sphericity, we
will discuss the properties of the SSCM.

For $\x \sim \mathcal E_p (\bmu, \M, g)$, we define the population SSCM as
\begin{equation}
    \SSCMpop = \E\Bigg[ \frac{(\x - \bmu)(\x - \bmu)^\top}{\|\x - \bmu
    \|^2}\Bigg],
\end{equation}
and since $\tr(\SSCMpop) = 1$, its corresponding shape matrix is given by
$\SSCMshapepop = p \SSCMpop$.

It is known that $\SSCMshapepop$ and $\shape$ have the same eigenvectors as well
as the multiplicities and the orders of the eigenvalues, but the eigenvalues
themselves are different~\cite{durreEigenvaluesSpatialSign2016}. The
\emph{sample SSCM} of the $k$th population and its corresponding shape estimator
are defined as 
\begin{equation}\label{eq:SSCM} 
    \SSCMk 
    = 
    \frac{1}{n_k} \sum_{i=1}^{n_k} 
    \frac{(\x_{i,k} - \bmu_k)(\x_{i,k} - \bmu_k)^\top}
    {\|\x_{i,k} - \bmu_k\|^2}
    ~\text{and}~\SSCMshape_k = p \SSCMk,
\end{equation} 
where the mean $\bmu_k$ is replaced with the \emph{sample spatial
median}~\cite{brownStatisticalUsesSpatial1983}, $\hat \bmu_k = \argmin_{\m}
\sum_{i=1}^{n_k}\|\x_{i,k} - \m\|$, when it is unknown. When the mean is known
$\E[\SSCMshape_k] = \SSCMshapepopk$, and $\SSCMshape_k$ is distribution-free
over the class of elliptical distributions. The latter statement can be proved
by writing~\eqref{eq:SSCM} in terms of the stochastic
representation~\eqref{eq:stochasticrepresentation} and observing that the
modular variate $r_k$ cancels out in each summand. 

Since the eigenvalues of $\SSCMshapepop$ and $\shape$ are different,
$\SSCMshape_k$ is biased~\cite{magyarAsymptoticInadmissibilitySpatial2014}.
Surprisingly, as shown in Theorem~\ref{thm:sign} below, this bias becomes
more negligible in higher dimensions provided the sequence of covariance
matrix structures being considered with increasing $p$ satisfies  
\begin{equation} \label{eq:gamma_oh_p}
    \gamma = o(p) \mbox{ as $p\to \infty$}.
\end{equation}
\begin{theorem}\label{thm:sign}
    Let $\x \sim \mathcal E_p(\bmu,\M,g)$. Then,
    \begin{align*}
        \SSCMshapepop = \shape + O(\gamma).
    \end{align*}
    Furthermore, if \eqref{eq:gamma_oh_p} holds, then $\SSCMshapepop = \shape +
    o(\|\shape\|_\Fro)$.
\end{theorem}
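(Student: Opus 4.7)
The plan is to exploit the stochastic representation to collapse the population SSCM into an integral over $\u$ uniform on the sphere, then perform an exact first-order expansion of $1/V$, where $V:=\u^\top\shape\u$. Taking $\bmu=\zero$ WLOG, and observing that the summand defining $\SSCMpop$ is invariant under positive scaling of $\x$, the modular variate cancels in the substitution $\x=r\shape^{1/2}\u$ (with scale absorbed into $\shape$), yielding
\begin{equation*}
    \SSCMshapepop = p\,\E\!\left[\frac{\shape^{1/2}\u\u^\top\shape^{1/2}}{V}\right].
\end{equation*}
Using $\E[\u\u^\top]=\I/p$ and the standard fourth moment on the sphere, $\E[u_iu_ju_ku_l]=(\delta_{ij}\delta_{kl}+\delta_{ik}\delta_{jl}+\delta_{il}\delta_{jk})/(p(p+2))$, I would derive $\E[V]=1$ and $\var(V)=2(\gamma-1)/(p+2)$.

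Next, the algebraic identity $1/V = 1 - (V-1) + (V-1)^2/V$ gives an exact decomposition
\begin{equation*}
    \SSCMshapepop - \shape
    = -p\,\E[\shape^{1/2}\u\u^\top\shape^{1/2}(V-1)]
      + p\,\E\!\left[\frac{\shape^{1/2}\u\u^\top\shape^{1/2}(V-1)^2}{V}\right].
\end{equation*}
The cross (first-order) term can be computed in closed form from the same fourth-moment formula: a short calculation gives $\E[\u\u^\top V]=(p\I+2\shape)/(p(p+2))$, from which the cross term reduces to $2(\shape^2-\shape)/(p(p+2))$. Multiplying by $p$ and applying the triangle inequality, its Frobenius norm is bounded by $2(\|\shape^2\|_\Fro+\|\shape\|_\Fro)/(p+2)$. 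Since $\|\shape\|_\Fro^2=p\gamma$ and $\lambdamax(\shape)^2\leq\tr(\shape^2)=p\gamma$, so $\|\shape^2\|_\Fro\leq \lambdamax(\shape)\|\shape\|_\Fro\leq p\gamma$, the cross term contributes $O(\gamma)$.

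For the remainder, the crucial observation is that the rank-one PSD matrix $\shape^{1/2}\u\u^\top\shape^{1/2}/V$ has trace (hence unique nonzero eigenvalue) equal to $V/V=1$, so its Frobenius norm is exactly $1$. Hence by Jensen's inequality,
\begin{equation*}
    \left\|p\,\E\!\left[\frac{\shape^{1/2}\u\u^\top\shape^{1/2}(V-1)^2}{V}\right]\right\|_\Fro
    \leq p\,\E[(V-1)^2] = \frac{2p(\gamma-1)}{p+2} = O(\gamma),
\end{equation*}
proving $\SSCMshapepop=\shape+O(\gamma)$. The second claim follows because $\|\shape\|_\Fro=\sqrt{p\gamma}$, whence $\gamma/\|\shape\|_\Fro=\sqrt{\gamma/p}\to 0$ under \eqref{eq:gamma_oh_p}. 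The main obstacle is obtaining the $O(\gamma)$ rate for the remainder rather than something weaker: a naive Cauchy--Schwarz bound on $\E[\|\shape^{1/2}\u\u^\top\shape^{1/2}\|_\Fro\cdot(V-1)^2/V]$ would require controlling a negative moment of $V$, which can be arbitrarily small; the rank-one/unit-trace identity for $\shape^{1/2}\u\u^\top\shape^{1/2}/V$ absorbs the offending $1/V$ at no cost and is what makes the bound tight enough to yield the desired asymptotic unbiasedness.
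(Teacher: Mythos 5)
Your proof is correct, and it reaches the conclusion by a genuinely different mechanism at the decisive step, even though it shares the same starting reduction ($\SSCMshapepop = p\,\E[\shape^{1/2}\u\u^\top\shape^{1/2}/V]$ with $V=\u^\top\shape\u$) and the same fourth-order moment computation on the sphere (your formula $\E[\u\u^\top V]=(p\I+2\shape)/(p(p+2))$ is equivalent to the paper's Lemma on $\E[\shape^{1/2}\u\u^\top\shape^{1/2}\,\u^\top\shape\u]$, and both yield the first-order correction $2(\shape^2-\shape)/(p+2)$). The paper instead uses the scalar inequality $1/x\geq 2-x$ to obtain only a one-sided Loewner bound $\SSCMshapepop \succeq \shape - 2(\shape^2-\shape)/(p+2)$, and then converts it into a two-sided Frobenius-norm statement through a positivity-plus-trace argument: the slack matrix is positive semidefinite, its trace is computable because $\tr(\SSCMshapepop)=\tr(\shape)=p$, and $\tr(\A)\geq\|\A\|_\Fro$ for $\A\succeq\zero$. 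Your route keeps the remainder exactly, via $1/V = 1-(V-1)+(V-1)^2/V$ (note the paper's inequality is precisely this identity with the nonnegative remainder dropped), and bounds it directly: since $\shape^{1/2}\u\u^\top\shape^{1/2}/V$ is rank one, PSD, with trace $1$, its Frobenius norm is exactly $1$ almost surely, so convexity of the norm gives the bound $p\,\E[(V-1)^2]=2p(\gamma-1)/(p+2)=O(\gamma)$, using $\E[V]=1$ and $\var(V)=2(\gamma-1)/(p+2)$. What each approach buys: the paper's argument avoids touching the remainder term at all, at the cost of the somewhat delicate trace/PSD conversion step; yours gives a two-sided bound immediately, with an explicit nonasymptotic constant ($\|\SSCMshapepop-\shape\|_\Fro \leq 2(\|\shape^2\|_\Fro+\|\shape\|_\Fro)/(p+2) + 2p(\gamma-1)/(p+2)$), and your closing remark correctly identifies the key point — the unit-norm observation absorbs the $1/V$ factor that would defeat a naive Cauchy--Schwarz bound. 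The concluding step, $\gamma/\|\shape\|_\Fro=\sqrt{\gamma/p}\to 0$ under $\gamma=o(p)$, matches the paper.
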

\begin{proof}
    See Appendix \ref{app:thm:sign}.
\end{proof}

The central condition~\eqref{eq:gamma_oh_p} holds for many common sequences of
covariance models, as shown in Proposition~\ref{prop:sphericity} below.
We first present, in the following lemma, a simple general condition under
which~\eqref{eq:gamma_oh_p} holds. This lemma is seen to hold in particular for
the case when the eigenvalues of $\shape$ are bounded as $p \to \infty$.

\begin{lemma}\label{lemma:lambda1}
    If $\shape$ is a shape matrix for which $\lambdamax(\shape)= O(p^{\tau/2})$,
    where $\tau < 1$, as $p \to \infty$, then \eqref{eq:gamma_oh_p} holds. 
\end{lemma}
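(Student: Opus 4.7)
The plan is to bound $\gamma$ directly from above using the largest eigenvalue of $\shape$ and the normalization $\tr(\shape) = p$. Recall from \eqref{eq:gamma_k} that $\gamma = \tr(\shape^2)/p$. Denoting the (nonnegative) eigenvalues of $\shape$ by $\lambda_1,\ldots,\lambda_p$, we have $\sum_i \lambda_i = p$ by the shape normalization, and
\[
    \tr(\shape^2) = \sum_{i=1}^p \lambda_i^2 \;\leq\; \lambdamax(\shape)\sum_{i=1}^p \lambda_i = \lambdamax(\shape)\cdot p.
\]
Dividing by $p$ gives $\gamma \leq \lambdamax(\shape)$.

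Under the hypothesis $\lambdamax(\shape) = O(p^{\tau/2})$ with $\tau < 1$, this yields $\gamma = O(p^{\tau/2})$. Since $\tau/2 < 1/2 < 1$, we have $p^{\tau/2}/p = p^{\tau/2 - 1} \to 0$ as $p \to \infty$, so $\gamma = o(p)$, which is exactly condition \eqref{eq:gamma_oh_p}.

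There is essentially no obstacle here: the argument is a one-line inequality $\tr(\shape^2) \leq \lambdamax(\shape)\,\tr(\shape)$ plus the normalization $\tr(\shape)=p$. The only thing worth emphasizing is the role of the exponent: the hypothesis allows $\lambdamax(\shape)$ to grow with $p$ as long as the growth is strictly slower than $p^{1/2}$, which is sufficient (but by no means necessary, as the bound $\gamma \leq \lambdamax(\shape)$ is generally loose when the spectrum is spread out) for the sphericity to grow strictly slower than $p$. In particular, the remark following the lemma statement, that eigenvalues of $\shape$ bounded as $p \to \infty$ suffice, corresponds to the case $\tau = 0$.
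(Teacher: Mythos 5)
Your proof is correct, and it follows the same basic idea as the paper's one-line argument (bound the sphericity $\gamma = \tr(\shape^2)/p$ via the largest eigenvalue), but the inequality you use is slightly different and in fact sharper. The paper bounds each squared eigenvalue by $\lambdamax(\shape)^2$, giving $\tr(\shape^2) \leq p\,\lambdamax(\shape)^2$ and hence $\gamma \leq \lambdamax(\shape)^2 = O(p^{\tau})$, which is $o(p)$ since $\tau < 1$. You instead use $\tr(\shape^2) \leq \lambdamax(\shape)\,\tr(\shape)$ together with the shape normalization $\tr(\shape) = p$, obtaining the tighter bound $\gamma \leq \lambdamax(\shape) = O(p^{\tau/2})$. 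Both routes reach \eqref{eq:gamma_oh_p} immediately, but your version makes explicit that the hypothesis could be weakened: $\gamma \leq \lambdamax(\shape)$ shows $\lambdamax(\shape) = o(p)$ already suffices, whereas the paper's bound genuinely needs growth slower than $p^{1/2}$ — a point you correctly flag when noting your bound is loose only when the spectrum is spread out.
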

\begin{proof}
    See Appendix~\ref{app:theoreticalsphericity}.
\end{proof}

\begin{proposition}\label{prop:sphericity}
    The following sequences, in $p$, of covariance models
    satisfy~\eqref{eq:gamma_oh_p}. 
    \begin{itemize}
        \item \emph{First order autoregressive} (AR(1)) covariance matrices:
            $(\M)_{ij} = \sigma^2 \varrho^{|i-j|}$, where $|\varrho| < 1$ and
            $\sigma^2 > 0$ are both fixed, i.e., they do not depend on the
            dimension $p$.

        \item \emph{1-banded Toeplitz covariance matrices}: $(\M)_{ij} =
            \sigma^2 \varrho^{|i-j|}$ for $|i-j| \leq 1$ and $(\M)_{ij} = 0$
            otherwise, where $|\varrho| < -(2\cos(p\pi/(p+1) )^{-1}$ and
            $\sigma^2 > 0$ are both fixed.

        \item \emph{Spiked covariance matrices}: $\M = \M_r + \alpha \I$, where
            $\M_r$ is positive semidefinite with rank $r \leq p$ and $ (r/p)
            [\lambda_{\max}(\M_r) /\alpha]^2 = o(p)$. Here, $\alpha$, $r$, and
            $\lambda_{\max}(\M_r)$ may vary with $p$.
    \end{itemize}
     However, for the \emph{compound symmetric} (CS) covariance matrix
     $(\M)_{ij} = \sigma^2 \varrho$, for $i\neq j$ and $(\M)_{ii} = \sigma^2$,
     where $\varrho \in (-(p-1)^{-1}, 1)$ and $\sigma^2 > 0$ are both fixed, one
     obtains $\gamma = O(p)$.
\end{proposition}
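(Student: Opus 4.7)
The plan is to handle the AR(1) and 1-banded Toeplitz bullets via Lemma~\ref{lemma:lambda1} by checking that $\lambdamax(\shape) = O(1)$, to verify the spiked case by a direct computation of $\gamma$ (since Lemma~\ref{lemma:lambda1} does not apply there), and to evaluate $\gamma$ explicitly for the CS case.

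For AR(1), I would observe that $\M$ is a symmetric Toeplitz matrix whose rows are absolutely summable with bound $\sigma^2(1+|\varrho|)/(1-|\varrho|)$, independent of $p$; this dominates $\lambdamax(\M)$ by the standard row-sum argument. Combined with $\tr(\M) = p\sigma^2$, this gives $\lambdamax(\shape) = p\lambdamax(\M)/\tr(\M) = O(1)$, so Lemma~\ref{lemma:lambda1} applies with $\tau = 0$. For the 1-banded Toeplitz model, I would use the explicit eigenvalue formula $\sigma^2(1 + 2\varrho\cos(k\pi/(p+1)))$, $k = 1,\ldots,p$; these are bounded in absolute value by $\sigma^2(1+2|\varrho|)$, and the stated condition on $\varrho$ is precisely what keeps $\lambda_{\min}(\M) > 0$. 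With $\tr(\M) = p\sigma^2$ again, $\lambdamax(\shape) = O(1)$ and Lemma~\ref{lemma:lambda1} applies once more.

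For the spiked model, I would expand $\tr(\M) = \tr(\M_r) + p\alpha$ and $\tr(\M^2) = \tr(\M_r^2) + 2\alpha\tr(\M_r) + p\alpha^2$, then substitute $\tr(\M_r) \leq rL$ and $\tr(\M_r^2) \leq rL^2$ (with $L = \lambdamax(\M_r)$) in the numerator of $\gamma$, and use the trivial bound $(\tr(\M))^2 \geq (p\alpha)^2$ in the denominator, valid since $\tr(\M_r) \geq 0$ and $\alpha > 0$. This yields
\[
    \gamma \;\leq\; \frac{rL^2}{p\alpha^2} + \frac{2rL}{p\alpha} + 1
    \;=\; v + 2\sqrt{(r/p)\,v} + 1,
\]
where $v = (r/p)(L/\alpha)^2 = o(p)$ by hypothesis. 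Because $r/p \leq 1$, the middle term is at most $2\sqrt{v} = o(\sqrt{p})$, so $\gamma = o(p)$.

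For the CS model, I would compute the eigenvalues directly as $\sigma^2(1 + (p-1)\varrho)$ (simple) and $\sigma^2(1-\varrho)$ (multiplicity $p-1$), which give $\tr(\M) = p\sigma^2$ and, after a short algebraic simplification, $\gamma = 1 + (p-1)\varrho^2$. This is $O(p)$ and is not $o(p)$ whenever $\varrho \neq 0$, showing the claim is tight. The main obstacle is the spiked case: Lemma~\ref{lemma:lambda1} is not directly available there because $L/\alpha$ need not be of any order strictly below $p^{1/2}$, so the quantitative bound on $\gamma$ in terms of $(r/p)(L/\alpha)^2$ given above is essential.
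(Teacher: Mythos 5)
Your proof is correct, but it takes a partly different route from the paper's. For the AR(1) and 1-banded Toeplitz models the paper does not invoke Lemma~\ref{lemma:lambda1} at all: it computes $\gamma = p^{-1}\sum_{i,j}\Lambda_{ij}^2$ exactly by counting the entries of the shape matrix, obtaining a closed form whose limit is $(1+\varrho^2)/(1-\varrho^2)$ for AR(1) and $\gamma_{\text{1B}} = 1 + 2(1-1/p)\varrho^2$ for the 1-banded case; the exact AR(1) limit is in fact quoted and used in Section~\ref{sec:estimation}, which is something your spectral argument (row-sum bound on $\lambdamax(\M)$ for AR(1), the tridiagonal eigenvalue formula for the 1-banded case, then Lemma~\ref{lemma:lambda1} with $\tau = 0$) does not deliver, though your route is shorter and is consistent with the paper's remark that bounded eigenvalues of $\shape$ suffice for the lemma. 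For the spiked model both arguments are essentially the same direct trace computation: the paper reparametrizes $\alpha = \beta \eta_r$ and obtains the slightly cleaner bound $\gamma \leq r\lambda_1^2/(p\alpha^2) + 1$, whereas your bound carries the harmless extra cross term $2\sqrt{(r/p)v}$, which you correctly control by $r/p \leq 1$ and $v = o(p)$; your observation that Lemma~\ref{lemma:lambda1} is not applicable here, since $\lambdamax(\M_r)/\alpha$ may grow faster than $p^{1/2}$ under the stated hypothesis, is accurate and explains why this bullet needs a separate computation in both proofs. For the CS model your eigenvalue calculation and the paper's entry count give the same value $\gamma_{\text{CS}} = 1 + (p-1)\varrho^2 = O(p)$, and your added remark that this fails to be $o(p)$ whenever $\varrho \neq 0$ makes the intended failure of~\eqref{eq:gamma_oh_p} explicit.
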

\begin{proof}
    See Appendix~\ref{app:theoreticalsphericity}.
\end{proof}

The restrictions on $\varrho$ for the covariance models in
Proposition~\ref{prop:sphericity} are needed to ensure that $\M$ is positive
definite.

Let us illustrate this result in the case that $\M$ has an AR(1) covariance
structure. In this case $\gamma \to (1+\varrho^2)/(1-\varrho^2)$ (see
Appendix~\ref{app:AR1}). For $\varrho=0.5$, $\gamma \to 5/3 = O(1)$. From
Theorem~\ref{thm:sign}, we then have that the relative error
$\|\shape\|_\Fro^{-1}\|\SSCMshapepop - \shape\|_\Fro$ is of the order
$O(\gamma/\|\shape\|_\Fro) = O(\sqrt{\gamma/p}) = O(p^{-1/2})$.

\blue{
We also have the following proposition about the normalized mean squared
distance between $\SSCMshape$ and a scaled SCM.
\begin{proposition}\label{prop:SSCMminusSCM}
    Let $\X = (\x_i) \sim \mathcal N_p(\zero,\M)$ and assume
    that~\eqref{eq:gamma_oh_p} holds. Then,
    \begin{equation*}
        \frac{\E \big[ \|\SSCMshape - \shape_{\text{SCM}}\|_\Fro^2 \big]}
        {\|\shape\|_\Fro^2}
        \overset{p\to \infty}\longrightarrow
        \frac{2}{n},
    \end{equation*}
    where $\shape_{\text{SCM}} = \eta^{-1} \frac{1}{n}\sum_{i=1}^n
    \x_i\x_i^\top$ and $\eta = \tr(\M)/p$, so that $\E[\shape_{\text{SCM}}] =
    \shape$.
\begin{proof}
    See Appendix~\ref{app:SSCMminusSCM}.
\end{proof}
\end{proposition}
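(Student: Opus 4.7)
The plan is to write the difference as a sample-wise sum, expand the squared Frobenius norm as a double sum over pairs, and handle the diagonal and off-diagonal contributions separately, using the Gaussian fourth moment for the former and Theorem~\ref{thm:sign} for the latter. First observe that
\[
\shape_{\text{SCM}} - \SSCMshape = \frac{1}{n}\sum_{i=1}^n u_i\, \x_i \x_i^\top,\qquad u_i = \frac{1}{\eta} - \frac{p}{\|\x_i\|^2},
\]
so, taking $\|\cdot\|_\Fro^2$ and splitting into the $n$ diagonal and $n(n-1)$ off-diagonal pairs,
\[
\E\bigl[\|\shape_{\text{SCM}} - \SSCMshape\|_\Fro^2\bigr] = \frac{1}{n}\E\bigl[u_1^2\|\x_1\|^4\bigr] + \frac{n-1}{n}\,\E\bigl[u_1 u_2 (\x_1^\top \x_2)^2\bigr].
\]

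For the diagonal term I would use the identity $u_1\|\x_1\|^2 = (\|\x_1\|^2 - p\eta)/\eta$, so that $\E[u_1^2\|\x_1\|^4] = \eta^{-2}\var(\|\x_1\|^2)$. For $\x_1 \sim \mathcal N_p(\zero,\M)$ the standard formula $\var(\|\x_1\|^2) = 2\tr(\M^2) = 2p\eta^2\gamma$ then gives $\E[u_1^2\|\x_1\|^4] = 2p\gamma$. Dividing by $\|\shape\|_\Fro^2 = p\gamma$ produces exactly the claimed $2/n$.

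For the off-diagonal term I would exploit the independence of $\x_1$ and $\x_2$ and the cyclic property of the trace, via $(\x_1^\top\x_2)^2 = \tr(\x_1\x_1^\top \x_2\x_2^\top)$, to obtain
\[
\E\bigl[u_1 u_2 (\x_1^\top\x_2)^2\bigr] = \tr\bigl(\E[u\,\x\x^\top]^2\bigr) = \|\shape - \SSCMshapepop\|_\Fro^2,
\]
where I used $\E[u\,\x\x^\top] = \eta^{-1}\M - p\,\E[\x\x^\top/\|\x\|^2] = \shape - \SSCMshapepop$. Invoking Theorem~\ref{thm:sign}, this quantity is $O(\gamma^2)$, so the off-diagonal contribution to $\E[\|\shape_\text{SCM} - \SSCMshape\|_\Fro^2]$ is $O(\gamma^2)$. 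Normalized by $\|\shape\|_\Fro^2 = p\gamma$, it becomes $O(\gamma/p) = o(1)$ by assumption~\eqref{eq:gamma_oh_p}. Adding the two contributions gives the limit $2/n$.

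The diagonal piece is a routine Gaussian fourth-moment computation; the real work is in the off-diagonal bound, where it is essential that $\shape - \SSCMshapepop$ has Frobenius norm $O(\gamma)$ rather than merely $O(1)$, so that the squared cross-term is truly of lower order than the $O(p\gamma/n)$ leading term. This is exactly what Theorem~\ref{thm:sign} buys, and it is the only ingredient beyond elementary Gaussian algebra that the argument requires.
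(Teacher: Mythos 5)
Your proof is correct, and it reaches the paper's conclusion by a somewhat different (and arguably cleaner) bookkeeping. The paper expands $\E[\|\SSCMshape - \shape_{\text{SCM}}\|_\Fro^2]$ into the three traces $\E[\tr(\SSCMshape^2)]$, $\E[\tr(\shape_{\text{SCM}}^2)]$, $\E[\tr(\SSCMshape\shape_{\text{SCM}})]$, evaluates the middle one with Wishart theory (the commutation-matrix identity $\mathrm{MSE}(\S)=(1/n)(\tr(\M)^2+\tr(\M^2))$), and then applies Theorem~\ref{thm:sign} to replace $\E[\SSCMshape]$ by $\shape$ asymptotically. You instead index the expansion by samples, writing the difference as $\frac{1}{n}\sum_i u_i\x_i\x_i^\top$ with $u_i=\eta^{-1}-p/\|\x_i\|^2$: the diagonal pairs give exactly $\eta^{-2}\var(\|\x_1\|^2)/n=2p\gamma/n$ using only the elementary Gaussian fact $\var(\|\x\|^2)=2\tr(\M^2)$ (no Wishart machinery), and the off-diagonal pairs collapse by independence to $\frac{n-1}{n}\|\shape-\SSCMshapepop\|_\Fro^2$, which Theorem~\ref{thm:sign} together with $\gamma=o(p)$ makes negligible after division by $\|\shape\|_\Fro^2=p\gamma$. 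The two computations are algebraically identical: your expression $\frac{2}{n}\tr(\shape^2)+\frac{n-1}{n}\|\shape-\SSCMshapepop\|_\Fro^2$ equals the paper's final display $\frac{n-1}{n}\big(\tr(\E[\SSCMshape]^2)-2\tr(\E[\SSCMshape]\shape)\big)+\frac{n+1}{n}\tr(\shape^2)$, but your packaging makes the limit $2/n$ appear exactly from the variance of $\|\x\|^2$ and isolates the single asymptotic ingredient in one nonnegative remainder. One small slip in your closing commentary: since $\gamma\geq 1$, an $O(1)$ bound on $\|\shape-\SSCMshapepop\|_\Fro$ would be \emph{stronger} than $O(\gamma)$, so "rather than merely $O(1)$" is backwards; what the argument actually needs is that the norm is $o(\|\shape\|_\Fro)=o(\sqrt{p\gamma})$, so that its square is of lower order than $p\gamma$, and either form of the conclusion of Theorem~\ref{thm:sign} delivers this. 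This does not affect the validity of the proof itself.
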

}

We may now focus on the sphericity estimator in~\eqref{eq:gammahat}. It can be
shown by straightforward calculation (see Appendix~\ref{app:SSCMminusSCM}) that
\begin{equation}
    \frac{\E[\tr(\SSCMshape^2)]}{p}
	=
	\frac{p}{n}
	+
	\frac{n-1}{n}
	\frac{\tr(\E[\SSCMshape]^2)}{p}.
	\label{eq:EtrtildeS2}
\end{equation}
If \eqref{eq:gamma_oh_p} holds, then by Theorem~\ref{thm:sign}, one has that 
\begin{equation} \label{eq:EtrtildeS2_apu}
    \tr(\E[\SSCMshape]^2)/p \to \gamma ~\text{as}~ p \to \infty.
\end{equation}
By \eqref{eq:EtrtildeS2} and \eqref{eq:EtrtildeS2_apu}, a natural estimator for
the sphericity of the $k$th class  is then
\begin{equation}\label{eq:gammahat}
    \hat \gamma_k = \frac{n_k}{n_k-1}\left( \frac{\tr(\SSCMshape_k^2)}{p} -
    \frac{p}{n_k}\right).
\end{equation}

In a high-dimensional setting, using the spatial median $\hat \bmu$
in~\eqref{eq:SSCM} results in nonnegligible error in the sphericity estimate.
This was shown in~\cite{zouMultivariateSignbasedHighdimensional2014}, which
considered SSCM based hypothesis testing of sphericity of the covariance matrix
(i.e., $H_0: \M \propto \I$), and where they used a similar sphericity statistic
as in~\eqref{eq:gammahat}. They also proposed a method for estimating and
correcting for this error. Hence, we use the corrected estimator of the
sphericity~\cite{zouMultivariateSignbasedHighdimensional2014}: 
\begin{equation}
    \hat \gamma_k^* = \hat \gamma_k - p d_k,
    \label{eq:gammahatstar}
\end{equation}%
where
\begin{align*}
    d_k &= \frac{1}{n_k^2} \cdot
    \Big(2 - 2\frac{q_{2,k}}{q_{1,k}^2} + \Big(\frac{q_{2,k}}{q_{1,k}^2}\Big)^2
    \Big)
    \\&
    + \frac{1}{n_k^3} \cdot
    \Big( 8 \frac{q_{2,k}}{q_{1,k}^2} - 6 \Big(\frac{q_{2,k}}{q_{1,k}^2}\Big)^2
    + 2 \frac{q_{2,k} q_{3,k}}{q_{1,k}^5} - 2 \frac{q_{3,k}}{q_{1,k}^3} \Big)
\end{align*}
and $q_{m,k} = (1/n_k) \sum_{i=1}^{n_k} \|\x_{i,k} - \hat \bmu_k \|^{-m}$. When
computational simplicity is desired, it is also possible to use $d_k \approx
n_k^{-2} + 2n_k^{-3}$, which is often a good approximation
(see~\cite{zouMultivariateSignbasedHighdimensional2014}).

\subsection{Final estimates}\label{sec:DhatChat}
The inner products $c_ {ij} = p^{-1} \tr(\M_i \M_j)$, for $i \neq j$, of the
matrix $\C$ are also estimated using SSCMs. In this case, however, no error
correction due to using the spatial median is needed
(see~\cite{chengTestingEqualityTwo2019}). The estimates for $\C$ and $\D$ are
thus $\hat \C = (\hat c_{ij})$ and $\hat \D =
\diag(\hat\delta_1,\ldots,\hat\delta_K)$, where 
\begin{equation}\label{eq:estimatorcrossterms}
    \hat c_{ij} =\begin{cases}
        \tr((\hat \eta_i \SSCMshape_i)(\hat \eta_j  \SSCMshape_j))/p,
        & \text{for $i \neq j$},
        \\
        \hat \gamma_i^* \hat\eta_i^2, & \text{for $i = j$},
    \end{cases}
\end{equation}
and $\hat\delta_k$ is obtained from~\eqref{eq:delta_k} via $\hat
\eta_k$~\eqref{eq:etahat},~$\hat \kappa_k$~\eqref{eq:kappahat}, and $\hat
\gamma_k^*$~\eqref{eq:gammahatstar}.

\section{Extensions and modifications}\label{sec:extensions}
In this section, we first show how to incorporate regularization towards the
identity matrix in the estimator. Then, we show how the optimal coefficients can
alternatively be solved via a semidefinite optimization problem, which enables
relaxing the nonnegativity constraint. Lastly, we extend the estimator to
complex elliptically symmetric distributions.

\subsection{Additional regularization towards the identity
matrix}\label{ssec:identity}
It is often beneficial to incorporate regularization towards the identity
matrix. For example, if  $p > n = \sum_k n_k$, then all of the SCMs are
singular. Regularization towards the identity can easily be added by using the
estimator
\begin{equation}\label{eq:LINPOOL}
    \LINPOOLI(\a) = \sum_{j=1}^K a_{j} \S_j + a_{I} \I,
\end{equation}
where $\a = (a_1,\ldots,a_K,a_I)^\top \in \real^{K+1}$. By constraining
$a_i~\geq~0$, $1 \leq j \leq K$, and $a_{I} \geq  \aLB$, where $\aLB>0$ is a
chosen lower bound for the identity regularization, the
estimator~\eqref{eq:LINPOOL} will be positive definite. Then
\begin{equation*}
    \a_k^\star = \argmin_{ a_j \geq 0,~ a_{I} \geq \aLBk} 
    \E \big[ \| \LINPOOLI(\a) - \M_k \|^2_\Fro \big]
\end{equation*}
can be solved via the strictly convex QP problem
\begin{equation} \label{eq:Linpool_QP}
      \begin{array}{ll}
            {\text{minimize}} &
            \frac{1}{2} \a^\top (\tilde \D + \tilde \C) \a - \tilde \cv_k^\top
                \a \\
                \text{subject to} &
                a_j \geq 0,~1 \leq j \leq K,  a_{I} \geq \aLBk
                ,
            \end{array}
\end{equation}
where
\begin{equation}\label{eq:tildeCD}
    \tilde\D = \diag(\delta_1,\cdots,\delta_K,0)
    \quad\text{and}\quad
    \tilde \C  =
        \begin{pmatrix}
            \C & \f \\
            \f^\top & 1
        \end{pmatrix},
\end{equation}
and $\f = (\eta_k) \in \real^K$ is a vector consisting of \emph{scales}
\eqref{eq:eta_k}. The unconstrained optimal solution for this case is $\A^\star
= (\tilde \D + \tilde \C)^{-1} \begin{pmatrix} \C & \f  \end{pmatrix}^\top$. The
positive definiteness of $\tilde \D + \tilde \C$ is shown in
Appendix~\ref{app:thm:MSE}. Algorithm~\ref{alg:linpool} summarizes the
procedure for computing the linearly pooled estimates of the class covariance
matrices.

\begin{algorithm}[!h]
    \caption{Linear pooling of SCMs with identity}\label{alg:linpool}
    \SetKwInOut{Input}{input}\SetKwInOut{Output}{output}
    \BlankLine
    \Input{Data $\mathcal X_1, \ldots, \mathcal X_K$ of the classes and
    $\aLBk > 0$, $k=1,\ldots,K$.}
	\Output{$\hat \M_1, \ldots, \hat \M_K$.}

    \BlankLine
    Compute SCMs $\S_1,\ldots,\S_K$ of the classes.

    Compute $\tilde\C$ and $\tilde\D$ of \eqref{eq:tildeCD} (estimate as in Section~\ref{sec:estimation}).

    Compute $\hat \A = (\hat \a_1 \cdots \hat \a_K) = (\tilde \D +
    \tilde\C)^{-1} \begin{pmatrix} \C & \f  \end{pmatrix}^\top$.

    \For{$k=1$ \KwTo $K$}{
        \If{$\exists i  \in \{1,\ldots,K\}: (\hat \a_k)_i < 0$ or $\hat a_{Ik} <  \aLBk$}
        {Set $\hat \a_k^\star$ as the solution of \eqref{eq:Linpool_QP}.
   }
        \Else{$\hat \a_k^\star \gets \hat \a_k$.}

        $\hat \M_k \gets \LINPOOLI(\hat \a_k^\star)$ of \eqref{eq:LINPOOL}.
	}
\end{algorithm}

\begin{remark}\label{remark:convexpooling}
    The QP formulation of the problem makes it easy to incorporate additional
    constraints if needed. For example, in order to find a convex combination of
    the SCMs the equality constraint $\one^\top \a = 1$ should be added to the
    QP~\eqref{eq:Linpool_QP}.
\end{remark} 

\subsection{Semidefinite programming formulation}
Constraining the coefficients in~\eqref{eq:AQP} to be nonnegative is sufficient
to ensure positive semidefiniteness of the
estimator~\eqref{eq:linearcombination}. In some cases this approach can be
suboptimal since it may be possible to obtain a positive semidefinite estimator
with a lower MSE by allowing some of the coefficients to be negative. Using
semidefinite programming (SDP), the nonnegativity constraint can be replaced
with a positive semidefiniteness constraint as
in~\cite{duFullyAutomaticComputation2010} as follows. The objective function
in~\eqref{eq:AQP} can be rewritten as $(\a - \B^{-1} \cv_k)^\top \B (\a -
\B^{-1} \cv_k) + \text{const.}$, where $\B = \D + \C$. By introducing an
auxiliary variable $t$ and constraint $(\a - \B^{-1} \cv_k)^\top \B (\a -
\B^{-1} \cv_k) \leq t$, the problem is converted into a minimization of $t$.
Using the Schur complement, the problem is reformulated as a semidefinite
program (SDP) in the variables $\a$ and $t$:
\begin{equation}\label{eq:SDP}
    \begin{array}{ll}
	\text{minimize} & t \\
	\text{subject to}
	& \begin{pmatrix}
	    t & (\a - \B^{-1} \cv_k)^\top \\
	    \a - \B^{-1} \cv_k & \B^{-1}
	\end{pmatrix} \succeq \zero \\
	& \LINPOOL(\a) \succeq \zero.
    \end{array}
\end{equation}
This is a convex optimization problem, which can be solved in polynomial time
with software such as CVX, which is a package for specifying and solving convex
programs~\cite{grantCVXMatlabSoftware2014,
grantGraphImplementationsNonsmooth2008}. \blue{There are cases where the
SDP~\eqref{eq:SDP}, nonnegativity constrained QP~\eqref{eq:AQP}, and
unconstrained problem~\eqref{eq:unconstrained} all give different solutions. In
these cases only the SDP and QP solutions yield positive semidefinite
estimators. In theory (when using the true $\C$ and $\D$), the SDP solution will
have a lower MSE. However, with estimation error in $\hat \C$ and $\hat \D$,
this is necessarily not the case. The computational complexity of the SDP
problem is significantly greater than that of the QP problem. The QP can be
computed with 2--3 orders of magnitude faster (depending on the problem
dimension). Due to the above reasons, we recommend the QP approach.}

\subsection{Extension to complex-valued data}
Extending the results to complex-valued data requires using the complex-valued
definitions of the elliptical kurtosis and the scaled MSE of the SCM. For a
review on complex elliptical symmetric (CES) distributions, see
e.g.,~\cite{ollilaComplexEllipticallySymmetric2012}. A CES distributed
(absolutely continuous) random vector $\x \in \complex^p$ from the $k$th
population has a density function up to a constant of the form
\begin{equation*}
    |\M_k|^{-1} h_k((\x - \bmu_k)^\hop \M_k^{-1} (\x - \bmu_k)),
\end{equation*}
where $h_k: \real_{\geq 0} \to \real_{> 0}$ is the density generator, $\bmu =
\E[\x]$ is the mean vector and  $\M_k = \E[ (\x - \bmu)(\x - \bmu)^\hop]$
denotes the Hermitian positive definite covariance matrix of $\x$.  Above
$(\cdot)^\hop$ denotes the Hermitian (complex-conjugate) transpose. We denote
this case by $\x \sim \complex \mathcal E_p(\bmu_k, \M_k, h_k)$. The definitions
of the SCM~\eqref{eq:SCM} and the SSCM~\eqref{eq:SSCM} stay unchanged except
that in their definitions $(\cdot)^\top$ is replaced with $(\cdot)^\hop$.
Furthermore, the inner products between Hermitian matrices ($\A = \A^\hop$)
remain unchanged since $\tr(\A\A^\hop) = \tr(\A^2)$. Hence, the definitions of
the sphericity parameter $\gamma_k$ as well as the scale $\eta_k$ remain
unchanged. The kurtosis of a complex marginal variable $x_i$ of $\x = (x_i)
\in \complex^p$ is defined as
\begin{equation*}
    \text{kurt}(x_i) = \frac{\E[|x_i-\mu_i|^4]}{\E[|x_i-\mu_i|^2]^2} - 2,
\end{equation*}
where $\mu_i = \E[x_i]$. The elliptical kurtosis is then
\begin{equation*}
    \kappa = (1/2) \text{kurt}(x_i).
\end{equation*}
The elliptical kurtosis $\kappa_k$ of class $k$ is estimated using the average
sample kurtosis of the marginal variables
\begin{equation}\label{eq:kappahatcomplex}
    \hat \kappa_k = 
    \frac{1}{2p}\sum_{j=1}^p
    \frac{\frac{1}{n_k}\sum_{i=1}^{n_k} |x_{ij,k} - \bar x_{j,k}|^4}
    {\left(\frac{1}{n_k}\sum_{i=1}^{n_k} |x_{ij,k}- \bar x_{j,k}|^2\right)^2} - 1,
\end{equation}
where $\bar x_{j,k} =\frac{1}{n_k}\sum_{i=1}^{n_k} x_{ij,k}$. The theoretical
lower bound of the kurtosis in the complex-valued case is $\kappa^{\text{LB}} =
-1/(p+1)$~\cite{ollilaComplexEllipticallySymmetric2012}. In case
\eqref{eq:kappahatcomplex} is less than $\kappa^{\text{LB}}$, we set $\hat
\kappa_k = 0.99 \kappa^{\text{LB}}$. 

The matrix $\D$ is estimated via~\eqref{eq:deltacomplex} given in the next
lemma, which is an extension of \cite[Lemma
1]{ollilaOptimalShrinkageCovariance2019} to the complex case.
\begin{lemma}
    \cite[Theorem 3]{raninenVariabilitySampleCovariance2021}
    Let $\mathcal X_k = \{\x_1,\ldots, \x_n\} \subset \complex^p$ be an i.i.d.
    random sample from $\mathcal \complex \mathcal E_p(\bmu_k,\M_k,h_k)$ with
    finite fourth-order moments. Then, the scaled MSE of the SCM is
    \begin{equation}
        \delta_k = \eta_k^2 \Big(
        \Big(\frac{1}{n_k - 1} + \frac{\kappa_k}{n_k}\Big) p
        + \frac{\kappa_k}{n_k} \gamma_k \Big).
        \label{eq:deltacomplex}
    \end{equation}
\end{lemma}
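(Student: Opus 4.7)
The plan is to mirror the real-case derivation of Ollila--Raninen \cite[Lemma~1]{ollilaOptimalShrinkageCovariance2019} but carry it out with the complex stochastic representation, tracking how circular symmetry modifies the moment bookkeeping. I drop the class index $k$ and, without loss of generality, take $\bmu = \zero$; the centering by $\bar\x$ is absorbed into the standard $n/(n-1)$ correction at the end. Using the Hermiticity of $\S$ and $\M$ together with $\E[\S] = \M$, the scaled MSE simplifies to
\[
    p\delta = \E\bigl[\tr(\S^2)\bigr] - \tr(\M^2),
\]
so the task reduces to evaluating $\E[\tr(\S^2)]$.

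Writing $\mathbf{y}_i = \x_i - \bar\x$, I expand $\E[\tr(\S^2)] = (n-1)^{-2}\sum_{i,j} \E\bigl[\,|\mathbf{y}_i^\hop\mathbf{y}_j|^2\bigr]$ and split the double sum into its diagonal ($i=j$) and off-diagonal ($i\neq j$) parts. I then insert the CES stochastic representation $\x = r\M^{1/2}\u$, with $r \perp \u$ and $\u$ uniform on the complex unit sphere, using
\[
    \E[\u\u^\hop] = \tfrac{1}{p}\I, \qquad
    \E[\u\u^\top] = \zero, \qquad
    \E\bigl[(\u^\hop\A\u)(\u^\hop\B\u)\bigr] = \frac{\tr(\A)\tr(\B)+\tr(\A\B)}{p(p+1)}
\]
for Hermitian $\A,\B$, together with $\E[r^2]=p$ and the CES kurtosis identity $\E[r^4] = p(p+1)(\kappa+1)$ (the factor $p+1$, rather than the real-case $p+2$, is precisely how the complex definition of $\kappa$ enters). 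These supply all the quartic moments needed for both the diagonal and off-diagonal contributions.

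The decisive structural difference from the real case is that circular symmetry forces $\E[\u\u^\top] = \zero$, which kills the cross terms that, in \cite[Lemma~1]{ollilaOptimalShrinkageCovariance2019}, produced the extra $\gamma$ summand inside the first parenthesis of \eqref{eq:delta_k}. After collecting the surviving diagonal and off-diagonal contributions, applying the $n/(n-1)$ correction from centering by $\bar\x$, and using $\tr(\M) = \eta p$ and $\tr(\M^2) = \eta^2 p\gamma$ to pass to the $(\eta,\gamma,\kappa)$ parametrization, the claimed formula \eqref{eq:deltacomplex} drops out.

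The main obstacle is purely organizational: the quartic expansion of $\tr(\S^2)$ has many terms that are easy to miscount, especially while simultaneously tracking which summands vanish by circularity. A clean route is to carry out the computation first for $\M = \I$ (where $\u^\hop\A\u$ reduces to $\u^\hop\u = 1$) to fix the combinatorial constants, and then lift to general $\M$ using affine equivariance of the CES family. This is essentially the strategy followed in the cited \cite[Theorem~3]{raninenVariabilitySampleCovariance2021}.
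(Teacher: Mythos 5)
The paper offers no proof of this lemma---it is imported wholesale from \cite[Theorem 3]{raninenVariabilitySampleCovariance2021}---so your argument has to stand on its own. Its ingredients are the right ones: the reduction $p\delta = \E[\tr(\S^2)] - \tr(\M^2)$ via unbiasedness and Hermiticity, the complex-sphere moments $\E[\u\u^\hop]=\I/p$ and $\E[(\u^\hop\A\u)(\u^\hop\B\u)]=(\tr(\A)\tr(\B)+\tr(\A\B))/(p(p+1))$, the identity $\E[r^4]=p(p+1)(1+\kappa)$ (which does match the paper's convention $\kappa=\tfrac12\,\mathrm{kurt}(x_i)$), and the diagnosis that circularity---the vanishing of the pseudo-covariance, i.e.\ the loss of one pairing in the quartic moment---is exactly what removes the $\big(\tfrac{1}{n-1}+\tfrac{\kappa}{n}\big)\gamma$ term present in the real formula \eqref{eq:delta_k}. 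That part is correct and is surely the skeleton of the cited derivation.

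The genuine gap is your treatment of the sample-mean centering. The hallmark of \eqref{eq:deltacomplex} is the mixed denominators: the Gaussian-type part carries $1/(n-1)$ while both kurtosis terms keep $1/n$, and no uniform ``$n/(n-1)$ correction'' can produce that split. Concretely, with a known mean and the unbiased $1/n$ normalization your moment calculus gives $\delta=\eta^2\big((1+\kappa)p+\kappa\gamma\big)/n$, and multiplying by $n/(n-1)$ yields $\eta^2\big((1+\kappa)p+\kappa\gamma\big)/(n-1)$, which disagrees with \eqref{eq:deltacomplex} whenever $\kappa\neq 0$. The route your displayed expansion already commits you to is the correct one, but it must be carried through: write $\mathbf{y}_i=\x_i-\bar\x=\x_i-\tfrac1n\sum_l\x_l$ and expand $\E[|\mathbf{y}_i^\hop\mathbf{y}_j|^2]$ into mixed fourth-order moments over up to four distinct sample indices (equivalently, invoke the exact finite-sample covariance of the entries of $\S$, whose $1/n$ part carries the fourth-order cumulants and whose $1/(n(n-1))$ part carries products of covariances); only after that bookkeeping do the $1/(n-1)$ and $\kappa/n$ contributions separate as claimed. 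As written, the proposal both expands in the centered $\mathbf{y}_i$ and then ``applies the $n/(n-1)$ correction,'' which would double-count. A smaller point: fixing the constants at $\M=\I$ and ``lifting by affine equivariance'' does not work either, because the Frobenius loss is not affine invariant---you need the general-$\A,\B$ moment identities you already listed, applied directly to $\E[\tr(\S^2)]$ for arbitrary $\M$.
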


\section{Multi-target shrinkage estimators}\label{sec:multitarget}

Multi-target shrinkage covariance matrix estimators are capable of
simultaneously regularizing towards several target matrices. They are mainly
designed for the single population setting. However, some of the multi-target
shrinkage estimators can also be used in a multiclass setting for pooling SCMs.
Therefore, we give a brief review of the existing methods. Lastly, we show how
our proposed method can be applied to a single class multi-target shrinkage
covariance matrix estimation problem.

\subsection{Overview of multi-target shrinkage estimators}
There exist multi-target shrinkage covariance matrix estimators, which can be
used with user-defined target matrices and therefore also for pooling SCMs. In
this section, we discuss these estimators and in Section~\ref{sec:simulations}
we compare their performance to our proposed method by simulations.

Since the multi-target shrinkage estimators are developed for the single class
covariance matrix estimation setting, we define the data set as $\X =
\{\x_1,\ldots,\x_n\}$, where $(\x_i)_j = x_{ij}$. Let $\M$ denote the covariance
matrix and let $\S$ denote the SCM computed from $\X$. Multi-target shrinkage
covariance matrix estimators are often defined by
\begin{equation}\label{eq:MTS}
    \hat \M(\a) = a_0 \S + \sum_{k=1}^K a_k \T_k,
\end{equation}
where $\T_k$, $k=1,\ldots,K$, are \blue{linearly independent} target matrices
and $a_j$, $j=0,\ldots,K$, are the regularization coefficients.
In~\cite{lancewickiMultitargetShrinkageEstimation2014} and
\cite{bartzMultitargetShrinkage2014}, convex multi-target shrinkage covariance
matrix estimators were proposed, where $a_k \geq 0$ and $\sum_{k=1}^K a_k \leq
1$ for $k=1,\ldots,K$ and $a_0 = 1-\sum_{k=1}^K a_k$. The coefficients $\a =
(a_k)_{k=1}^K$ were chosen as the minimizers of the MSE loss function
\begin{equation*}
    L(\a; \Q, \mathbf b) = \E[\|\hat \M(\a) - \M \|^2_\Fro] = \a^\top
    \Q \a - 2\a^\top \mathbf b + \E[\|\S - \M\|^2_\Fro]
\end{equation*}
resulting in the constrained QP problem
\begin{equation*}
    \a^\star = \argmin_{\a \geq \zero, \one^\top \a \leq 1}
    L(\a; \Q, \mathbf b),~\text{where}
\end{equation*}
\begin{align*}
    (\Q)_{ij} = q_{ij} &= \E[\tr\big((\T_i-\S)(\T_j-\S)\big)]
\end{align*}
and $(\mathbf b)_i = b_i = \E[\tr\big((\T_i - \S)(\M - \S)\big)] = \E[\|\S -
\M\|_\Fro^2] - \E[\tr\big((\S - \M)(\T_i - \E[\T_i])\big)]$. As the elements in
$\Q$ and $\mathbf b$ depend on the unknown covariance matrix $\M$, they have to
be estimated. In~\cite{bartzMultitargetShrinkage2014}, the proposed estimates
were $\hat q_{ij} = \tr\big((\T_i-\S)(\T_j-\S)\big)$ and
\begin{equation*}
    \hat b_i \equiv \hat b
    = \sum_{i,j} \frac{1}{n(n-1)} \sum_{s=1}^n
    \Big( x_{si} x_{sj} - \frac{1}{n}\sum_{t=1}^n x_{ti} x_{tj} \Big)^2,
\end{equation*}
where the latter was obtained by approximating $\E[\tr\big((\S - \M)(\T_i -
\E[\T_i])\big)] \approx 0$. We compare our method with this method in the
simulations of Section~\ref{sec:simulations}, where it is denoted by BARTZ.
Regarding~\cite{lancewickiMultitargetShrinkageEstimation2014}, a specific
structural condition \cite[(10) and
(21)]{lancewickiMultitargetShrinkageEstimation2014} was imposed on the target
matrices, which prevents using it for pooling SCMs.

A leave-one-out-cross-validation (LOOCV) approach was considered
in~\cite{tongLinearShrinkageEstimation2018} for different scenarios. Their
proposition for SCM based linear shrinkage estimation used the LOOCV loss
function
\begin{align*}
    L_{\text{CV}}(\a; \Q_{\text{CV}}, \mathbf b_{\text{CV}})
    &=
    \frac{1}{n}\sum_{i=1}^n
    \big\| a_0 \S_{-i} + \sum_{k=1}^K a_k \T_k - \x_i \x_i^\top \big\|_\Fro^2
    \\
    &=
    \a^\top \Q_{\text{CV}} \a - 2\a^\top \mathbf b_{\text{CV}} + \text{const.},
\end{align*}
where $\mathbf{x}_i$ is assumed to have zero mean, $\mathbf{S}_{-i} =
\frac{1}{n} \sum_{j, j\neq i} \mathbf{x}_j\mathbf{x}_j^\top$ is the SCM computed
without the $i$th sample, and $\a = (a_0,a_1,\ldots,a_K)^\top \geq \zero$.
They showed how the elements in
$\Q_{\text{CV}}$ and $\mathbf b_{\text{CV}}$ can be computed analytically. They
also proposed a corresponding convex SCM based shrinkage estimator, which
requires that the targets have the same trace as the estimated SCM. In the
simulations of Section~\ref{sec:simulations}, we denote the linear estimator by
LOOCV.

Multi-target shrinkage estimators have also been proposed, for example
in~\cite{grayShrinkageEstimationLarge2018} from a Bayesian perspective and
in~\cite{zhangImprovedCovarianceMatrix2019} by regularizing the Gaussian
likelihood function. However, in both methods the target matrices are assumed to
be positive definite, and hence, SCMs cannot be used as targets when $n_k < p$.
In~\cite{riedlMultimodelShrinkageKnowledgeaided2018} a multi-target shrinkage
estimator of the form~\eqref{eq:MTS} was proposed for space-time adaptive
processing (STAP) problems, where the reliable estimation of the loss function
required additional prior information.

\subsection{Multi-target shrinkage estimation via pooling SCMs}\label{sec:arbitrarytargets}

As discussed above, some of the multi-target shrinkage estimators can be used in
order to linearly pool SCMs in a multi-class setting. Conversely, the proposed
method of linearly pooling SCMs can be used in single class covariance matrix
estimation as a multi-target shrinkage estimator as explained below.

Consider that there is only a single data set $\mathcal X$, its corresponding
SCM $\S$ and multiple positive definite symmetric target matrices
$\{\T_m\}_{m=1}^M$. Our approach for multi-target shrinkage estimation is
detailed in Algorithm~\ref{alg:linpool2}. The idea is to generate artificial
data sets $\{\mathcal Y_m\}_{m=1}^M$ with covariance matrices $\{\T_m\}_{m=1}^M$
so that $\X$ and the generated data sets $\{\mathcal Y_m\}_{m=1}^M$ are
approximately mutually independent. Specifically, $\mathcal Y_m$ is
conditionally independent of $\mathcal X$ given $\T_m$. Then, for the SCMs
$\{\S_{T_m}\}_{m=1}^M$ computed from $\{\mathcal Y_m\}_{m=1}^M$, we make the
approximations $\E[\tr(\S_{T_i} \S_{T_j})] \approx \tr(\E[\S_{T_i}]\E[\S_{T_j}])
= \tr(\T_i \T_j)$, for $i \neq j$, as well as $\E[\tr(\S \S_{T_j})] \approx
\tr(\E[\S]\E[\S_{T_j}]) = \tr(\M \T_j)$. We use the zero mean MVN distribution
to generate the data sets, i.e., $\mathcal Y_m \sim \mathcal N_p(\zero,\T_m)$,
$m=1,\ldots,M$, each consisting of $L$ samples.  We can then apply the proposed
method for pooling $\S$ and $\{\S_{T_m}\}_{m=1}^M$. We illustrate the usefulness
of this method in a portfolio optimization problem in
Section~\ref{sec:portfolio}, where it compares well with the other multi-target
methods.

\begin{algorithm}[!h]
    \caption{\text{Multi-target shrinkage estimator}}\label{alg:linpool2}
    \SetKwInOut{Input}{input}\SetKwInOut{Output}{output}
    \BlankLine
    \Input{Data $\X$, targets $\{\T_m\}_{m=1}^M$, sample size $L$, and
    lowerbound $\aLB>0$ for identity target.}
    \Output{$\hat \M$.}
    \BlankLine
    Generate i.i.d. samples $\mathcal Y_m \sim \mathcal N_p(\zero ,\T_m)$
    for $m=1,\ldots,M$ each of size $L$.

    Compute SCM $\S$ from $\mathcal X$ and SCMs $\S_{T_1}, \ldots, \S_{T_M}$
    from $\mathcal Y_1,\ldots, \mathcal Y_M$.

    Compute $\tilde\C$ and $\tilde\D$ of~\eqref{eq:tildeCD} (estimate as in Section~\ref{sec:estimation}).

    $\hat \a^\star \gets
        \underset{\a \geq \zero, a_{I} \geq \aLB}{\arg \min}
        \frac{1}{2} \a^\top (\tilde \D + \tilde \C) \a - \tilde \cv_k^\top \a$.

    $\hat \M \gets \LINPOOLI(\hat \a^\star)$ of \eqref{eq:LINPOOL}.
\end{algorithm}

\section{Simulation study}
\label{sec:simulations}

\setlength\tabcolsep{5.0pt}
\begin{table}
    \centering
    \caption{The normalized mean squared error over 1000 repetitions and
    standard deviation in the parenthesis.}\label{table:NMSE}
    \scriptsize
    \begin{tabular}{l l l l l l}
	\toprule
	& class 1 & class 2 & class 3 & class 4 & total\\
	\midrule
    \multicolumn{6}{l}{\underline{\emph{AR(1)}}}\vspace{0.2em}\\
		LOOCV & 0.14 {\scriptsize{(0.06)}}& 0.14 {\scriptsize{(0.01)}}& 0.24 {\scriptsize{(0.06)}}& 0.26 {\scriptsize{(0.02)}}& 0.78 {\scriptsize{(0.09)}}\\
	BARTZ & 0.15 {\scriptsize{(0.05)}}& 0.18 {\scriptsize{(0.01)}}& 0.29 {\scriptsize{(0.05)}}& 0.30 {\scriptsize{(0.02)}}& 0.91 {\scriptsize{(0.08)}}\\
	LINPOOL & 0.11 {\scriptsize{(0.05)}}& 0.14 {\scriptsize{(0.01)}}& 0.22 {\scriptsize{(0.05)}}& 0.26 {\scriptsize{(0.02)}}& {\bf{0.73}} {\scriptsize{(0.08)}}\\
	LINPOOL-C & 0.12 {\scriptsize{(0.03)}}& 0.18 {\scriptsize{(0.01)}}& 0.27 {\scriptsize{(0.04)}}& 0.30 {\scriptsize{(0.02)}}& 0.87 {\scriptsize{(0.07)}}\\
    \midrule
    \multicolumn{6}{l}{\underline{\emph{CS}}}\vspace{0.2em}\\
		LOOCV & 0.18 {\scriptsize{(0.33)}}& 0.05 {\scriptsize{(0.05)}}& 0.18 {\scriptsize{(0.39)}}& 0.06 {\scriptsize{(0.07)}}& 0.47 {\scriptsize{(0.52)}}\\
	BARTZ & 0.20 {\scriptsize{(0.33)}}& 0.05 {\scriptsize{(0.05)}}& 0.13 {\scriptsize{(0.14)}}& 0.06 {\scriptsize{(0.04)}}& 0.43 {\scriptsize{(0.37)}}\\
	LINPOOL & 0.15 {\scriptsize{(0.29)}}& 0.04 {\scriptsize{(0.03)}}& 0.15 {\scriptsize{(0.24)}}& 0.05 {\scriptsize{(0.04)}}& 0.38 {\scriptsize{(0.38)}}\\
	LINPOOL-C & 0.16 {\scriptsize{(0.29)}}& 0.04 {\scriptsize{(0.03)}}& 0.12 {\scriptsize{(0.14)}}& 0.06 {\scriptsize{(0.04)}}& {\bf{0.38}} {\scriptsize{(0.32)}}\\
    \midrule
    \multicolumn{6}{l}{\underline{\emph{Mixed: $\M_1$ and $\M_2$ are AR(1);
        $\M_3$ and $\M_4$ are CS}}}\vspace{0.2em}\\
    	LOOCV & 0.18 {\scriptsize{(0.06)}}& 0.21 {\scriptsize{(0.01)}}& 0.18 {\scriptsize{(0.38)}}& 0.06 {\scriptsize{(0.06)}}& 0.64 {\scriptsize{(0.39)}}\\
	BARTZ & 0.17 {\scriptsize{(0.04)}}& 0.31 {\scriptsize{(0.01)}}& 0.13 {\scriptsize{(0.14)}}& 0.05 {\scriptsize{(0.04)}}& 0.66 {\scriptsize{(0.15)}}\\
	LINPOOL & 0.16 {\scriptsize{(0.05)}}& 0.21 {\scriptsize{(0.01)}}& 0.15 {\scriptsize{(0.24)}}& 0.06 {\scriptsize{(0.05)}}& {\bf{0.58}} {\scriptsize{(0.25)}}\\
	LINPOOL-C & 0.14 {\scriptsize{(0.02)}}& 0.31 {\scriptsize{(0.02)}}& 0.13 {\scriptsize{(0.14)}}& 0.05 {\scriptsize{(0.04)}}& 0.64 {\scriptsize{(0.15)}}\\
    \bottomrule
    \end{tabular}
\end{table}

This section provides several simulation studies in order to assess the MSE
performance of the proposed estimator as well as the accuracy of the plugin
estimates of $\D$ and $\C$ and the estimated coefficients $\hat\A$. We denote
the proposed linear pooling estimator~\eqref{eq:LINPOOL}, which includes
shrinkage towards identity, by LINPOOL. The LINPOOL estimator with an additional
convexity constraint on the coefficients ($\one^\top \a = 1$) is denoted by
LINPOOL-C. For LINPOOL and LINPOOL-C, we set the lowerbound for the identity
target in all of the simulations to $\aLBk = 10^{-8}$. In addition to the
proposed methods, the results are reported for the multi-target shrinkage method
LOOCV from~\cite{tongLinearShrinkageEstimation2018}, which uses a nonnegative
linear combination of the SCMs and the identity matrix, and BARTZ
from~\cite{bartzMultitargetShrinkage2014}, which uses a convex combination of
the SCMs and the identity matrix (see Section~\ref{sec:multitarget} for
details).
\pgfplotsset{deltaStyle/.style={
    boxplot/draw direction=x,
    ytick={1,2,3,4},
    yticklabels={
	\makebox[2.0ex][r]{$\hat\delta_1$},
	\makebox[2.0ex][r]{$\hat\delta_2$},
	\makebox[2.0ex][r]{$\hat\delta_3$},
	\makebox[2.0ex][r]{$\hat\delta_4$}},
    x tick label style={font=\small},
    y tick label style={font=\small},
    mark size=2,
    cycle list name={black white},
}}

\pgfplotsset{cijStyle/.style={
    boxplot/draw direction=x,
    ytick={1,2,3,4,5,6,7,8,9,10},
    yticklabels={
	\makebox[2.0ex][r]{$\hat c_{11}$},
	\makebox[2.0ex][r]{$\hat c_{12}$},
	\makebox[2.0ex][r]{$\hat c_{13}$},
	\makebox[2.0ex][r]{$\hat c_{14}$},
	\makebox[2.0ex][r]{$\hat c_{22}$},
	\makebox[2.0ex][r]{$\hat c_{23}$},
	\makebox[2.0ex][r]{$\hat c_{24}$},
	\makebox[2.0ex][r]{$\hat c_{33}$},
	\makebox[2.0ex][r]{$\hat c_{34}$},
	\makebox[2.0ex][r]{$\hat c_{44}$}},
    x tick label style={font=\small},
    y tick label style={font=\small},
    mark size=2,
    cycle list name={black white},
}}

\pgfplotsset{aijStyle/.style={
    boxplot/draw direction=x,
    ytick={1,2,3,4,5},
    yticklabels={
	\makebox[2.0ex][r]{$\hat a_{14}$},
	\makebox[2.0ex][r]{$\hat a_{24}$},
	\makebox[2.0ex][r]{$\hat a_{34}$},
	\makebox[2.0ex][r]{$\hat a_{44}$},
	\makebox[2.0ex][r]{$\hat a_{I4}$}},
    x tick label style={font=\small},
    y tick label style={font=\small},
    mark size=2,
    cycle list name={black white},
}}

\begin{figure}[t]
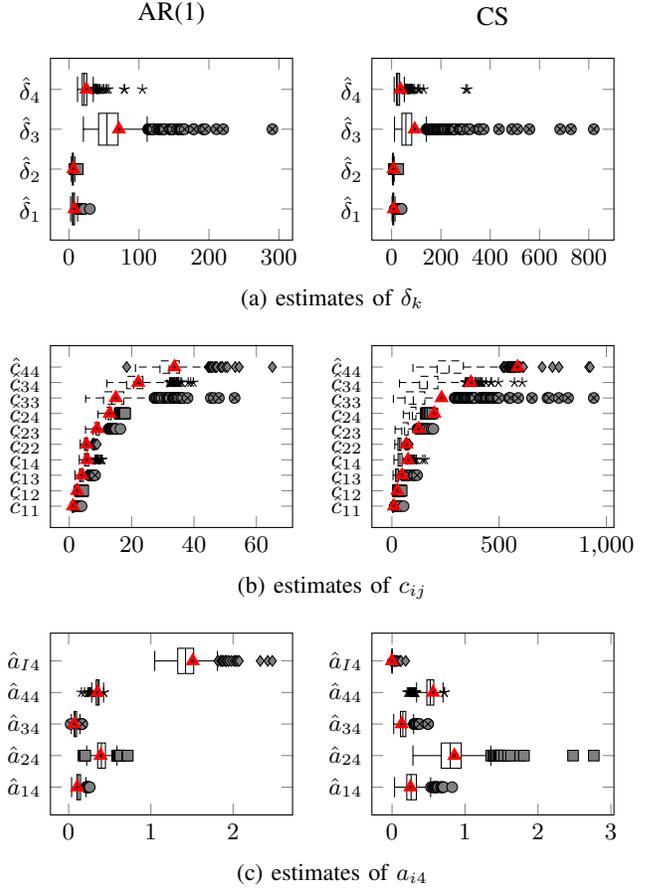

    \centering
    \begin{tikzpicture}
	\begin{groupplot}[
		name=DELTA,
		group style = {
		    group size = 2 by 3,
		    horizontal sep=30pt,
		    vertical sep=40pt,
		    },
		    width=4.8cm,
		    height=4cm,
		    ]
		    \nextgroupplot[deltaStyle, title={AR(1)},
		    title style = {yshift=3pt},
		    ]
		    \input{results/AR1-delta-boxplot.tex}
		    \nextgroupplot[deltaStyle, title={CS},
		    title style = {yshift=3pt},
		    ]
		    \input{results/CS-delta-boxplot.tex}
		    \nextgroupplot[cijStyle]
		    \input{results/AR1-cij-boxplot.tex}
		    \nextgroupplot[cijStyle]
		    \input{results/CS-cij-boxplot.tex}
		    \nextgroupplot[aijStyle]
		    \input{results/AR1-LIN2-a4-boxplot.tex}
		    \nextgroupplot[aijStyle]
		    \input{results/CS-LIN2-a4-boxplot.tex}
	\end{groupplot}
	\node[anchor=north] at ($(group c1r1.south east)!0.5!(group
	c2r1.south west)-(0,15pt)$) {{\small (a) estimates of $\delta_k$}};
	\node[anchor=north] at ($(group c1r2.south east)!0.5!(group
	c2r2.south west)-(0,15pt)$) {{\small (b) estimates of $c_{ij}$}};
	\node[anchor=north] at ($(group c1r3.south east)!0.5!(group
	c2r3.south west)-(0,15pt)$) {{\small (c) estimates of $a_{i4}$}};
    \end{tikzpicture}
    \caption{Estimates of $\delta_k$, $c_{ij}$, and $a_{i4}$ (coefficients for
    $\hat \M_4$ of LINPOOL). \emph{Left}: AR(1) setup. \emph{Right}: CS setup.
    The red triangles denote the theoretical values.}\label{fig:estimates}
\end{figure}

\pgfplotsset{totalnmsestyle/.style={
    width=4.8cm,
    height=4cm,
    xlabel near ticks,
    ylabel near ticks,
    tick label style={font=\footnotesize},
    label style={font=\footnotesize},
    legend style = {draw=none},
    scaled y ticks = true,
    yticklabel style={
        /pgf/number format/precision=3,
        /pgf/number format/fixed},
    every axis y label/.style={
        at={(rel axis cs:-0.25,0.5)},	% ylabel position
        rotate=90,			            % rotate label 90 degrees
        font=\footnotesize},		    % font size
    cycle list name={black white},
}}
\pgfplotstableread{results/totalNMSEcomplexAR1.dat}\totalNMSE
\pgfplotstableread{results/totalNMSEcomplexclassesAR1.dat}\totalNMSEclasses
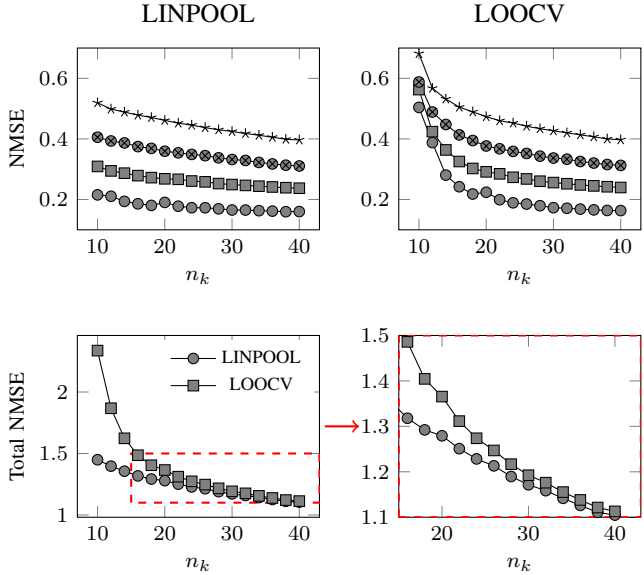
\begin{figure}[t]
    \centering
    \begin{tikzpicture}[scale=1]
        \begin{groupplot}[%
            group style = {group size = 2 by 2,
                           horizontal sep = 30pt,
                           vertical sep = 40pt}
                           ]
            \nextgroupplot[%
                totalnmsestyle,
                title = {LINPOOL},
                ymin = 0.1,
                ymax = 0.7,
                xlabel = {$n_k$},
                ylabel = {NMSE}]
            \addplot table [x=nvals, y=ave_SE_LINPOOL_1] {\totalNMSEclasses};
            \addplot table [x=nvals, y=ave_SE_LINPOOL_2] {\totalNMSEclasses};
            \addplot table [x=nvals, y=ave_SE_LINPOOL_3] {\totalNMSEclasses};
            \addplot table [x=nvals, y=ave_SE_LINPOOL_4] {\totalNMSEclasses};

            \nextgroupplot[%
                totalnmsestyle,
                title = {LOOCV},
                legend to name={totalnmseleg},
                legend style={draw=none,font=\scriptsize, legend columns = 3},
                ymin = 0.1,
                ymax = 0.7,
                xlabel = {$n_k$}]
            \addplot table [x=nvals, y=ave_SE_LOOCV_1] {\totalNMSEclasses};
            \addplot table [x=nvals, y=ave_SE_LOOCV_2] {\totalNMSEclasses};
            \addplot table [x=nvals, y=ave_SE_LOOCV_3] {\totalNMSEclasses};
            \addplot table [x=nvals, y=ave_SE_LOOCV_4] {\totalNMSEclasses};

            \nextgroupplot[%
                totalnmsestyle,
                xlabel = {$n_k$},
                ylabel = {Total NMSE}]
                legend style={draw=none,font=\scriptsize},
            \addplot table [x=nvals,
            y=total_SE_LINPOOL]{\totalNMSE};\addlegendentry{\scriptsize
            LINPOOL};
            \addplot table [x=nvals, y=total_SE_LOOCV]{\totalNMSE};
            \addlegendentry{\scriptsize LOOCV};
            \draw [red,dashed,thick] (15,1.1) rectangle (43,1.5);

            \nextgroupplot[%
                xmin=15,
                xmax=43,
                ymin=1.1,
                ymax=1.5,
                totalnmsestyle,
                xlabel = {$n_k$},
                ]
            \addplot table [x=nvals, y=total_SE_LINPOOL]{\totalNMSE};
            \addplot table [x=nvals, y=total_SE_LOOCV]{\totalNMSE};
            \draw [red,dashed,thick] (15,1.1) rectangle (43,1.5);
        \end{groupplot}
        \draw [thick,red,->,shorten >=15pt,shorten <=2pt] (group c1r2.east) --
        (group c2r2.west);
        \coordinate (c1) at ($(group c1r2.south west)!0.5!(group c2r2.south
        east)$);
        \node[below] at ($(c1)-(0,20pt)$)
        {\pgfplotslegendfromname{totalnmseleg}};
    \end{tikzpicture}
    \caption{NMSE as a function of sample size in the complex-valued AR(1) case
    with 4 classes. \emph{Top}: NMSE of individual classes. \emph{Bottom}: total
    combined NMSE.}\label{fig:totalNMSE}
\end{figure}

\subsection{Three different setups}
In the first simulation, we considered two different covariance matrix
structures: the AR(1), where $(\M_k)_{ij} = \sigma_k^2 \varrho_k^{|i-j|}$; and
the CS structure, where $(\M_k)_{ij} = \sigma_k^2 \varrho_k$, for $i\neq j$ and
$(\M_k)_{ij} = \sigma_k^2$ for $i = j$. We generated four $p=100$ dimensional
random samples of sizes $(n_1 = 20, n_2 = 100, n_3 = 20, n_4 = 100)$ from four
independent multivariate $t$-distributions with $\nu=8$ degrees of freedom. The
means of the classes were generated from the standard MVN distribution and held
constant over the repetitions. We simulated three different setups. In the first
setup, all covariance matrices had an AR(1) structure. In the second setup, all
covariance matrices had an CS structure. In the last mixed setup, classes $1$
and $2$ had an AR(1) structure and classes $3$ and $4$ had a CS structure.
For all setups, we used $\sigma_k^2 = k$ and $(\varrho_1 = 0.3, \varrho_2 = 0.4,
\varrho_3 = 0.5, \varrho_4 = 0.6)$.

Table~\ref{table:NMSE} tabulates the normalized mean squared error,
\begin{equation*}
    \NMSE(\hat\M_k) =
    \text{Ave}\|\hat \M_k - \M_k \|^2_\Fro / \|\M_k\|^2_\Fro,
\end{equation*}
and total NMSE (sum of NMSEs of the classes) of the covariance matrix estimates
for the three different setups over 1000 repetitions. Table~\ref{table:NMSE}
shows that the proposed method LINPOOL performed best in the AR(1) setup and the
mixed setup, whereas LINPOOL-C performed best (slightly better than LINPOOL) in
the CS setup.

Figure~\ref{fig:estimates} displays boxplots of the estimates $\hat\delta_k$ and
$\hat c_{ij}$ both for the AR(1) case as well as the CS case. The estimates of
the optimal coefficients for LINPOOL for the fourth class $\hat a_{i4}$ are also
shown. As can be seen from the boxplots, for the AR(1) case, the medians of the
estimates were mostly correctly placed over the true values, which are denoted
by the red triangles ({\color{red} $\blacktriangle$}). For the CS case, there
was some significant bias in the estimation of $c_{ij}$, which is due to the
fact that the assumption \eqref{eq:gamma_oh_p} does not hold in this case as
explained in Section~\ref{sec:signestimators}. Despite of this, the final
coefficients $\hat a_{ij}$ were reasonable well estimated.  

Here, it is good to note that, for the CS case, the coefficient
corresponding to identity shrinkage $\hat a_{I4}$ is close to zero. When this
happens, there is a possibility that (despite having a low MSE) the estimate is
not well-conditioned resulting in high error when inverting the estimate.
Therefore, in these cases (depending on the conditioning of the estimate) it can
be useful to increase the lower bound $\aLB_4$. Generally, for class $k$, one
could use $\aLBk = \alpha \eta_k$, where $\alpha \in [0,1]$.

\blue{
\subsection{Increasing the number of classes}

Next, we examine the NMSE of the LINPOOL estimator of class 1 as the number of
classes $K$ increase from 2 to 16 classes. The setup is as follows. The first
class has an AR(1) covariance matrix structure with a fixed parameter
$\varrho_1=0.5$. The other class covariance matrices also have an AR(1)
structure except for the classes $k=4,8,12,16$, which have a CS structure. The
parameter $\varrho_k$, for $k\geq 2$, is chosen uniformly at random from the
interval $[0.1, 0.6]$ for each Monte Carlo trial. The means of the classes were
generated from the standard MVN distribution for each Monte Carlo trial. The
sample sizes are equal with $n_k=40$ for all classes $k$. The dimension is
$p=100$ and the data is multivariate $t$-distributed with $\nu=8$ degrees of
freedom. 

Figure~\ref{fig:varK} depicts the results averaged over 1000 Monte Carlo trials
for each value of $K$. The red vertical lines mark the spots when the added
class covariance matrix has an CS structure (i.e., doesn't share the same
structure as class 1). One can observe that every time an AR(1) structured class
is added, the NMSE decreases. When the added class has a different covariance
structure (the CS structure), the NMSE does not decrease. An exception to this
is when $K=4$. A reason for this may be that the total number of observations is
still relatively low and including the fourth class helps in reducing the
variance of the estimate.
}

\pgfplotstableread{results/incK.dat}\incKtable
\begin{figure}
    \centering
    \begin{tikzpicture}
        \begin{axis}[xlabel={number of classes $K$},
            ylabel={$\NMSE$ of $\hat\M_1$},
            width = 8.858333cm,
            height = 4cm,
            xlabel near ticks,
            ylabel near ticks,
            tick label style={font=\footnotesize},
            label style={font=\footnotesize},
            legend style = {draw=none},
            scaled y ticks = true,
            yticklabel style={
                /pgf/number format/precision=3,
                /pgf/number format/fixed},
            cycle list name={black white},
            ]
            \addplot table[x=Karr, y=NMSE1] {\incKtable};
            \draw[red] (axis cs:4,0) -- (axis cs:4,0.34);
            \draw[red] (axis cs:8,0) -- (axis cs:8,0.34);
            \draw[red] (axis cs:12,0) -- (axis cs:12,0.34);
            \draw[red] (axis cs:16,0) -- (axis cs:16,0.34);
        \end{axis}
    \end{tikzpicture}
    \caption{NMSE of $\hat\M_1$ as the number of classes increase.}\label{fig:varK}
\end{figure}
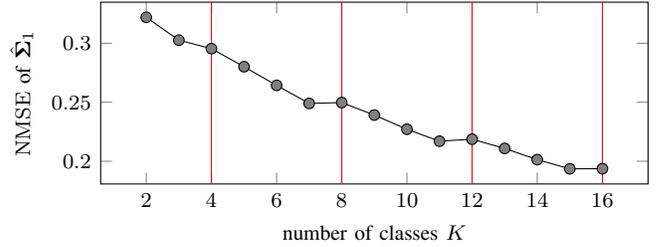

\subsection{Complex-valued case}
In the next simulation, the classes have an AR(1) covariance matrix structure,
\blue{$(\M_k)_{ij} = \sigma_k^2 \varrho_k^{|i-j|}$ for $i \leq j$, and
$(\M_k)_{ji} = (\M_k^*)_{ij}$, for $i>j$, where $(\cdot)^*$ denotes complex
conjugation.} The used parameters were $\sigma_k^2 =
k$, $\varrho_1 = 0.3 e^{\jmath 2\pi 0.3}$, $\varrho_2 = 0.4 e^{\jmath 2\pi
0.4}$, $\varrho_3 = 0.5 e^{\jmath 2\pi 0.5}$, and $\varrho_4 = 0.6 e^{\jmath
2\pi 0.6}$. We simulated the NMSE as a function of the sample size from $n_k \in
\{10, 15, \ldots 40\}$ for all $k$. The data was generated from the complex
multivariate $t$-distribution with $\nu=8$ degrees of freedom and dimension
$p=100$. The results were averaged over 1000 Monte Carlo trials for each sample
size and are shown in Figure~\ref{fig:totalNMSE}. It can be observed that
especially for small sample sizes ($n_k < 30$) LINPOOL performed better than
LOOCV.

\section{Portfolio optimization} \label{sec:portfolio}

\pgfplotsset{portfoliostyle/.style={
        width=4.8cm,
        height=4cm,
        xlabel near ticks,
        ylabel near ticks,
        tick label style={font=\footnotesize},
        label style={font=\footnotesize},
        scaled y ticks = true,
        yticklabel style={
            /pgf/number format/precision=3,
        /pgf/number format/fixed},
        every axis y label/.style={
            at={(rel axis cs:-0.25,0.5)},	% ylabel position
            rotate=90,			            % rotate label 90 degrees
        font=\footnotesize},		    % font size
        cycle list name={black white},
}}
\newcommand{\portplots}{%
    \addplot table [x=n,y=LW-well]{\loadedtable};
    \addplot table [x=n,y=LW-improved]{\loadedtable};
    \addplot table [x=n,y=LW-honey]{\loadedtable};
    \addplot[teal,mark=triangle*] table [x=n,y=LW-analytical]{\loadedtable};
    \addplot[blue,mark=*] table [x=n,y=LOOCV]{\loadedtable};
    \addplot[green,mark=square*] table [x=n,y=BARTZ]{\loadedtable};
    \addplot[red,mark=oplus*] table [x=n,y=LINPOOL]{\loadedtable};
    \addplot[red,mark=triangle*] table [x=n,y=LINPOOLC]{\loadedtable};
}

\newcommand{\portplotswleg}{
    \addplot table [x=n,y=LW-well]{\loadedtable};\addlegendentry{LW-well}
    \addplot table [x=n,y=LW-improved]{\loadedtable};\addlegendentry{LW-improved}
    \addplot table [x=n,y=LW-honey]{\loadedtable};\addlegendentry{LW-honey}
    \addplot[teal,mark=triangle*] table [x=n,y=LW-analytical]{\loadedtable};\addlegendentry{LW-analytical};
    \addplot[blue,mark=*] table [x=n,y=LOOCV]{\loadedtable};\addlegendentry{LOOCV}
    \addplot[green,mark=square*] table [x=n,y=BARTZ]{\loadedtable};\addlegendentry{BARTZ};
    \addplot[red,mark=oplus*] table [x=n,y=LINPOOL]{\loadedtable};\addlegendentry{LINPOOL}
    \addplot[red,mark=triangle*] table [x=n,y=LINPOOLC]{\loadedtable};\addlegendentry{LINPOOL-C}
}
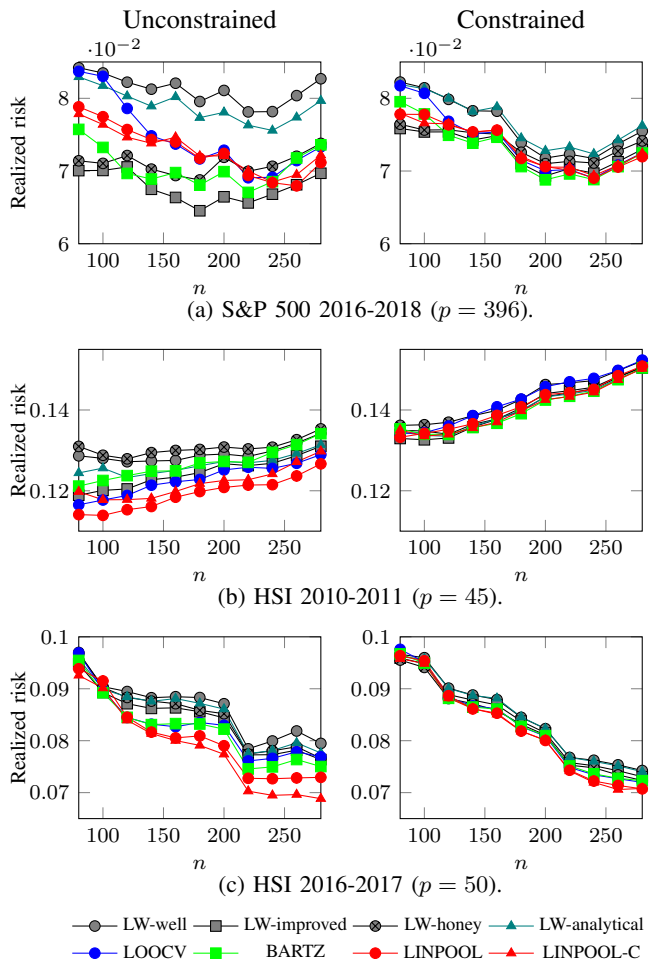
\begin{figure}[t]
    \centering
    \begin{tikzpicture}[scale=1]
        \begin{groupplot}[group style = {
                group size = 2 by 3,
                horizontal sep = 30pt,
            vertical sep = 40pt},
            ]
            \nextgroupplot[title={Unconstrained},
            title style = {yshift=3pt},
            portfoliostyle,
            xlabel = {$n$},
            ylabel = {Realized risk},
            xmin = 80,
            xmax = 280,
            ymin = 0.06,
            ymax = 0.085,
            ]
            \pgfplotstableread{results/minvarportfolioresultsSP500noconstraintsRISK.dat}\loadedtable
            \portplots
            \nextgroupplot[title={Constrained},
            title style = {yshift=3pt},
            portfoliostyle,
            xlabel = {$n$},
            ylabel = {\mbox{}},
            xmin = 80,
            xmax = 280,
            ymin = 0.06,
            ymax = 0.085,
            ]
            \pgfplotstableread{results/minvarportfolioresultsSP500longonlywithconstraintsRISK.dat}\loadedtable
            \portplots
            \nextgroupplot[portfoliostyle,
            xlabel = {$n$},
            ylabel = {Realized risk},
            xmin = 80,
            xmax = 280,
            ymin = 0.11,
            ymax = 0.155,
            ]
            \pgfplotstableread{results/minvarportfolioresultsHSI_y2010noconstraintsRISK.dat}\loadedtable
            \portplots
            \nextgroupplot[portfoliostyle,
            xlabel = {$n$},
            ylabel = {\mbox{}},
            xmin = 80,
            xmax = 280,
            ymin = 0.11,
            ymax = 0.155,
            ]
            \pgfplotstableread{results/minvarportfolioresultsHSI_y2010longonlywithconstraintsRISK.dat}\loadedtable
            \portplots
            \nextgroupplot[portfoliostyle,
            xlabel = {$n$},
            ylabel = {Realized risk},
            xmin = 80,
            xmax = 280,
            ymin = 0.065,
            ymax = 0.10,
            ]
            \pgfplotstableread{results/minvarportfolioresultsHSI_y2016noconstraintsRISK.dat}\loadedtable
            \portplots
            \nextgroupplot[portfoliostyle,
            xlabel = {$n$},
            ylabel = {\mbox{}},
            xmin = 80,
            xmax = 280,
            ymin = 0.065,
            ymax = 0.10,
            legend to name={portleg},
            legend style={draw=none,font=\scriptsize, legend columns = 4},
            ]
            \pgfplotstableread{results/minvarportfolioresultsHSI_y2016longonlywithconstraintsRISK.dat}\loadedtable
            \portplotswleg
        \end{groupplot}

        \coordinate (c1) at ($(group c1r3.south west)!0.5!(group
        c2r3.south east)$);
        \node[below] at ($(c1)-(0pt,30pt)$) {\pgfplotslegendfromname{portleg}};
        \node[anchor=north] at ($(group c1r1.south east)!0.5!(group
        c2r1.south west)-(0,18pt)$) {
        {\small (a) S\&P 500 2016-2018 ($p=396$).}};
        \node[anchor=north] at ($(group c1r2.south east)!0.5!(group
        c2r2.south west)-(0,18pt)$) {
        {\small (b) HSI 2010-2011 ($p=45$).}};
        \node[anchor=north] at ($(group c1r3.south east)!0.5!(group
        c2r3.south west)-(0,18pt)$) {
        {\small (c) HSI 2016-2017 ($p=50$).}};
    \end{tikzpicture}
    \caption{Annualized realized GMVP risk achieved out-of-sample for
        different covariance matrix estimators and different training window
        lengths $n$. \emph{Left:} unconstrained portfolio. \emph{Right}:
        constrained portfolio (nonnegative weights and maximum single asset weight
    0.1).}\label{fig:portfolio}
\end{figure}

We studied the performance of the proposed method in a portfolio optimization
problem using divident adjusted daily closing prices.  Portfolio optimization is
a central topic in investment theory, see,
e.g.,~\cite{markowitzPortfolioSelection1952,
    markowitzPortfolioSelectionEfficient1959, tobinLiquidityPreferenceBehavior1958,
sharpeCapitalAssetPrices1964}, and~\cite{lintnerValuationRiskAssets1965}. A
focus in portfolio optimization has been on the estimation of the covariance
matrix of the stock returns, commonly using shrinkage regularization techniques
or random matrix theory, see,
e.g.,~\cite{ledoitImprovedEstimationCovariance2003,ledoitWellconditionedEstimatorLargedimensional2004,
    ledoitAnalyticalNonlinearShrinkage2020, ledoitSpectrumEstimationUnified2015,
yangRobustStatisticsApproach2015},
and~\cite{fengSignalProcessingPerspective2016}. In a portfolio optimization
problem, a particular investment portfolio is determined by a weight or
allocation vector $\w \in \real^p$ (verifying the constraint $\one^\top \w = 1$)
whose elements describe the fraction of the total wealth invested in each of the
$p$ stocks. We considered two different portfolios. First, we considered the
\emph{global minimum variance portfolio} (GMVP) in which one seeks a portfolio
$\w$ that minimizes the risk (variance). The optimization problem is thus
\begin{equation}\label{eq:portfolioweights}
    \underset{\w \in \real^p}{\text{minimize}} ~ \w^\top \M \w
    \quad
    \text{subject to} ~ \one^\top \w = 1.
\end{equation}
The well-known solution is $\w^\star = \frac{\M^{-1} \one}{\one^\top \M^{-1}
\one}$, where $\M$ is the covariance matrix of the stock returns. We also
considered a \emph{constrained} portfolio, where the coefficients are
constrained to be within the range $0 \leq w_i \leq 0.1$, for all $i$, i.e.,
shorting (negative weights) is not allowed and the portfolio manager is not
allowed to put more than 10\% of the wealth in one stock. The optimization
problem for this case is the same as in~\eqref{eq:portfolioweights} but having
the additional constraint $\zero \leq \w \leq 0.1 \cdot \one$, which results in
a QP problem. 

In the simulation, the covariance matrix $\M$ was estimated via a sliding
window method so that at day $t$ it was estimated using the daily net returns
of the previous $n$ days from $t-n$ to $t-1$. The portfolio weights were then
computed via~\eqref{eq:portfolioweights} with and without the additional
weight constraints. These yielded the portfolio returns for the next 20
(trading) days. Then, the sliding window shifted 20 days forward and the
procedure was repeated. By denoting the total number of days in the data by
$T$, we obtained $T-n$ daily returns from which we computed the realized risk
as the empirical standard deviation of the daily portfolio returns. 

We applied the proposed method (explained in Section~\ref{sec:arbitrarytargets}
and Algorithm~\ref{alg:linpool2}) for single class covariance matrix estimation
using the same target matrices for regularization as
in~\cite{ledoitImprovedEstimationCovariance2003}
and~\cite{ledoitHoneyShrunkSample2004}. The target matrices were the single
factor market index model $\T_F$
of~\cite{ledoitImprovedEstimationCovariance2003} and the constant correlation
model $\T_C$ of~\cite{ledoitHoneyShrunkSample2004}. Their computation is
explained in~\cite{ledoitImprovedEstimationCovariance2003}
and~\cite{ledoitHoneyShrunkSample2004}, respectively. At day $t$, we used the
$n$ previous days to compute the SCM. However, due to the nature of our method,
we were able to freely choose the amount of data used for computing the
regularization target matrices. Hence, we chose to use the previous 40 days
($t-40$ to $t-1$ corresponding to the previous two months) for computing $\T_F$
and $\T_C$, regardless of the window size $n$ ($n > 40$) used for estimating the
SCM. This can be justified by the fact that the trend of the market is better
captured by the most recent net returns. After computing $\T_F$ and $\T_C$, we
generated 1000 i.i.d. samples both from $\mathcal Y_C \sim \mathcal
N(\zero,\T_C)$ and $\mathcal Y_F \sim \mathcal N(\zero, \T_F)$ and estimated the
coefficients for the proposed LINPOOL estimator $\LINPOOLI(\a) = a_1 \S + a_2
\S_F + a_3 \S_C + a_I \I$, where $\S_F$ and $\S_C$ denote the SCMs computed from
$\mathcal Y_F$ and $\mathcal Y_C$, respectively. We also report the performance
of the LINPOOL estimator~\eqref{eq:LINPOOL} using a convex combination as
explained in Remark~\ref{remark:convexpooling}. For both methods, we used the
lowerbound constraint $a_{I} \geq \aLB = 10^{-8}$. We also report the
performance of the multi-target shrinkage estimation methods LOOCV
of~\cite{tongLinearShrinkageEstimation2018} and BARTZ
of~\cite{bartzMultitargetShrinkage2014} using the same target matrices ($\T_F$,
$\T_C$, and $\I$) as for our proposed method. Additionally, we report the
performance of the following methods specifically tailored for portfolio
optimization: LW-well
of~\cite{ledoitWellconditionedEstimatorLargedimensional2004}, LW-improved
of~\cite{ledoitImprovedEstimationCovariance2003}, LW-honey
of~\cite{ledoitHoneyShrunkSample2004}, and the random matrix theory based
estimator LW-analytical of~\cite{ledoitAnalyticalNonlinearShrinkage2020}.

We used the same portfolio data sets as
in~\cite{ollilaOptimalShrinkageCovariance2019} \blue{(obtained from
\url{https://finance.yahoo.com})}. That is, Standard \& Poor's 500 stock market
index (S\&P 500) tracking $396$ stocks from from Jan. 4, 2016 to Apr. 27, 2018
consisting of $T=583$ days. Hang Seng Index (HSI) from Jan. 4, 2010 to Dec. 24,
2011 ($45$ stocks during $T=491$ days) and from Jan. 1, 2016 to Dec. 27, 2017
($50$ stocks during $T= 489$ days after removing two zero return days). 

Figure~\ref{fig:portfolio} shows the annualized risk obtained by the different
estimators as a function of the training window length $n$.  Regarding the S\&P
500 2016-2018 data, there were large differences in the performances of the
methods.  LW-improved performed best and LW-well the worst.  The differences
between the methods were smaller in the constrained case, where BARTZ performed
best.  Regarding the HSI 2010-2011 data set, the proposed method LINPOOL
achieved the lowest risk for the unconstrained portfolio with all window sizes
$n$. For the constrained portfolio, all of the methods performed nearly equally
well with LW-improved having the lowest risk with the window length $n=100$.
Regarding the HSI 2016-2017 data, the proposed methods (LINPOOL and LINPOOL-C)
achieved the lowest risk for both the constrained and unconstrained portfolios
for all window sizes $n > 120$.

\section{Conclusions}\label{sec:conclusions}
The paper proposed a regularized sample covariance matrix estimation method for
high-dimensional multiclass problems. The proposed estimator was formed from a
linear combination of the class SCMs. We derived the theoretically optimal
coefficients that minimize the mean squared error. The optimal coefficients
depend on unknown parameters, and their estimation was addressed under the
assumption that the samples are generated from unknown elliptically symmetric
populations with finite fourth-order moments. In constructing estimators for the
unknown parameters, we utilized the sample spatial sign covariance matrix, which
we showed in Theorem~\ref{thm:sign} to be an asymptotically unbiased estimator
of the normalized covariance matrix in the case that the sphericity parameter of
the distribution grows slower than the dimension. The effectiveness and
usefulness of the proposed method was demonstrated via simulations and a
portfolio optimization problem using real stock data. Codes are available at
\href{https://github.com/EliasRaninen}{github.com/EliasRaninen}.

\appendices
\section{Proof of Theorem~\ref{thm:MSE}}\label{app:thm:MSE}
Write $f(\a)= \E[ \| \LINPOOL(\a) - \M_k \|_\Fro^2]$ for the objective function.
By expanding the expression for the squared error, we get
\begin{align*}
        &\| \LINPOOL(\a) - \M_k \|^2_\Fro
        = \tr\Big( \Big(\sum_{i=1}^K a_{i} \S_i - \M_k\Big)
        \Big( \sum_{j=1}^K a_{j} \S_j - \M_k\Big)^\top \Big)
        \\
        &= \sum_{i=1}^K \sum_{j=1}^K a_{i}a_{j} \tr(\S_i \S_j)
        - 2\sum_{j=1}^K a_{j} \tr(\S_j \M_k) + \tr(\M_k^2).
\end{align*}
Taking the expectation and scaling by $1/p$ gives
\begin{equation*}
    (1/p) f(\a) = \a^\top \B \a - 2 \cv_k^\top \a + c_{kk},
\end{equation*}
where
\begin{align*}
    \B &= (b_{ij}) \in \real^{K \times K}, ~ b_{ij} = \E[\tr(\S_i \S_j)]/p,
    \\
    \C &= (c_{ij}) \in \real^{K \times K}, ~ c_{ij} = \tr(\M_i \M_j)/p,
\end{align*}
and $\cv_k$ corresponds to the $k$th column of $\C$. For $i \neq j$,
\begin{align*}
    b_{ij} &= p^{-1} \E[ \tr(\S_i \S_j) ] = p^{-1} \tr(\E[\S_i]\E[\S_j]) \\
           &= p^{-1} \tr(\M_i\M_j) = c_{ij}.
\end{align*}
Using that $\E[\tr(\S_i^2)] = \mathrm{MSE}(\S_i) + \tr(\M^2_i)$, we get for
$i=j$,
\begin{equation*}
    b_{ii} = p^{-1} \E[\tr(\S_i^2)] = \delta_i + c_{ii},
\end{equation*}
and so $\B = \D + \C$. By definition $\D$ is symmetric and positive definite.
Also, $\C$ is symmetric and positive semidefinite. This follows since for any
$\m \in \real^K$, $\m^\top \C \m = \sum_{i=1}^K \sum_{j=1}^K m_i m_j p^{-1}
\tr(\M_i \M_j) = \tr(\M_*^2) \geq 0$, where $\M_* = p^{-1/2}\sum_{j=1}^K m_j
\M_j$. Hence $\D + \C \succ \zero$ is invertible.

Regarding the extension discussed in Section~\ref{sec:extensions}, we now show
the positive definiteness of $\tilde \D + \tilde \C$. Since we know that $\D +
\C \succ \zero$, due to the properties of the Schur
complement~\cite[A.5.5]{boydConvexOptimization2004}, it holds that
\begin{equation*}
    \tilde \D + \tilde \C =
    \begin{pmatrix}
        \D + \C & \f \\
        \f^\top & 1
    \end{pmatrix}
    \succ \zero
    \Leftrightarrow
    \D + \C - \f \f^\top \succ \zero.
\end{equation*}
We can then show the positive definiteness of $\tilde \D + \tilde \C$ by showing
that $\D + \C - \f \f^\top$ is positive definite. For $\m \in \real^K \setminus
\{\zero\}$,
\begin{align*}
        &\m^\top (\D + \C - \f \f^\top)\m
        =
        \sum_k m_k^2 \delta_k
        + \sum_{i,j} m_i m_j (c_{ij} - \eta_i\eta_j)
        \\
        &=
        \sum_{k} m_k^2 \delta_k + \tr (\tilde \M_*^2) > 0,
\end{align*}
where $\tilde \M_* = p^{-1/2} \sum_{j=1}^K m_j (\M_j - \eta_j \I)$.
So, $\tilde\D+\tilde\C \succ \zero$. 
\qed

\section{Proof of Theorem~\ref{thm:sign}}\label{app:thm:sign}
Let $\x \sim \mathcal E_p(\bmu, \M, g)$ and assume $\SSCMshape$ is computed
with a known mean $\bmu=\zero$ in~\eqref{eq:SSCM}. Then $\E[\SSCMshape] =
p\E[\x \x^\top /\|\x\|^2] \equiv \SSCMshapepop$. Let $\u = (u_i) =
\M^{-1/2}\x/\|\M^{-1/2}\x\|$. So, $\u$ is uniformly distributed on the unit
sphere $\{\u \in \real^p : \|\u\| = 1\}$. Denote $\shape = (\Lambda_{ij})$.  We
need the following lemma.
\begin{lemma}\label{lemma:result}
    \begin{equation*}
        \E\left[ \shape^{1/2} \u\u^\top \shape^{1/2} \u^\top \shape \u \right]
        = p^{-1}(p+2)^{-1} \left( \tr(\shape)\shape + 2\shape^2 \right).
    \end{equation*}
\end{lemma}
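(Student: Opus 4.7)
The plan is to reduce the expectation to a computation involving only fourth moments of the uniform distribution on the sphere. The key observation is that $\u^\top \shape \u$ is a scalar, so
\[
    \E\!\left[ \shape^{1/2} \u\u^\top \shape^{1/2}\, \u^\top \shape \u \right]
    = \shape^{1/2}\, M\, \shape^{1/2}, \qquad
    M := \E\!\left[ \u\u^\top\, (\u^\top \shape \u) \right].
\]
Thus it suffices to show that $M = \frac{1}{p(p+2)}\big(\tr(\shape)\I + 2\shape\big)$, from which right- and left-multiplication by $\shape^{1/2}$ gives exactly $\frac{1}{p(p+2)}\big(\tr(\shape)\shape + 2\shape^2\big)$.

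To compute $M$, I would work in coordinates. Writing $\shape = (\Lambda_{kl})$, one has
\[
    M_{ij} = \sum_{k,l} \Lambda_{kl}\, \E[u_i u_j u_k u_l].
\]
The next step uses the standard identity for the fourth moments of a random vector uniformly distributed on the unit sphere in $\real^p$:
\[
    \E[u_i u_j u_k u_l] = \frac{1}{p(p+2)}\big(\delta_{ij}\delta_{kl} + \delta_{ik}\delta_{jl} + \delta_{il}\delta_{jk}\big).
\]
This identity follows from the fact that $\u \stackrel{d}{=} \z/\|\z\|$ for $\z \sim \mathcal N_p(\zero,\I)$, together with Isserlis' theorem applied to $\z$ and the independence of $\|\z\|$ from $\u$, but any reader familiar with spherical moments can take it as given.

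Substituting and using symmetry of $\shape$, the three Kronecker terms contract respectively to $\delta_{ij}\tr(\shape)$, $\Lambda_{ij}$, and $\Lambda_{ji}=\Lambda_{ij}$, yielding
\[
    M_{ij} = \frac{1}{p(p+2)}\big(\delta_{ij}\tr(\shape) + 2\Lambda_{ij}\big),
\]
i.e.\ $M = \frac{1}{p(p+2)}(\tr(\shape)\I + 2\shape)$. Conjugating by $\shape^{1/2}$ closes the argument. The computation is entirely routine; the only minor subtlety is recalling (or deriving) the spherical fourth-moment formula, so I would state it explicitly with a brief justification and otherwise keep the proof to a few lines.
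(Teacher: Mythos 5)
Your proof is correct and follows essentially the same route as the paper: both reduce the problem to fourth-order moments of $\u$ uniform on the unit sphere and then conjugate by $\shape^{1/2}$. The only difference is presentational---you invoke the compact identity $\E[u_i u_j u_k u_l] = \frac{1}{p(p+2)}(\delta_{ij}\delta_{kl}+\delta_{ik}\delta_{jl}+\delta_{il}\delta_{jk})$ and contract it in one line, whereas the paper assembles the same moment values ($\E[u_l^4]=3p^{-1}(p+2)^{-1}$, $\E[u_l^2u_j^2]=p^{-1}(p+2)^{-1}$) through an entrywise case analysis of $\E[u_k u_l \u\u^\top]$.
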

\begin{proof}
    \begin{align*}
            &\E\left[ \shape^{1/2}\u\u^\top\shape^{1/2} \u^\top \shape \u \right]
            =
            \sum_{k,l} \Lambda_{kl} \shape^{1/2} \E[ u_k u_l \u\u^\top] \shape^{1/2}
            \\
            &=
            \shape^{1/2} \Big(
                \sum_{l} \Lambda_{ll} \E[ u_l^2 \u\u^\top]
                +
                \sum_{k \neq l} \Lambda_{kl} \E[ u_k u_l \u\u^\top]
            \Big) \shape^{1/2}.
            \numberthis
            \label{eq:wholething}
    \end{align*}
    Here (see~\cite[3.1.2]{fangSymmetricMultivariateRelated2018}),
    \begin{equation*}
        (\E[ u_l^2 \u\u^\top])_{jj} = \begin{cases}
            \E[u_l^4] = 3 p^{-1}(p+2)^{-1}, & l = j, \\
            \E[u_l^2 u_j^2] = p^{-1}(p+2)^{-1}, & l \neq j, \\
        \end{cases}
    \end{equation*}
    and $(\E[ u_l^2 \u\u^\top])_{ij} = 0$, for $i\neq j$. Hence, $\E[ u_l^2
    \u\u^\top] = p^{-1}(p+2)^{-1}(\I + 2\e_l\e_l^\top)$. Thus,
    \begin{align*}
            &\sum_{l} \Lambda_{ll} \E[ u_l^2 \u\u^\top]
            \\
            &= p^{-1}(p+2)^{-1}( (\sum_{l} \Lambda_{ll}) \I + 2 \sum_{l}
            \Lambda_{ll} \e_l\e_l^\top )
            \\
            &= p^{-1}(p+2)^{-1}(\tr(\shape)\I + 2\diag(\shape)).
            \numberthis
            \label{eq:firstterm}
    \end{align*}
    Regarding the other term in~\eqref{eq:wholething}, where $k\neq l$,
    \begin{align*}
            &(\E[ u_k u_l \u\u^\top])_{mn}
            \\
            &= \begin{cases}
                \E[u_k^2 u_l^2] = p^{-1}(p+2)^{-1}, & k=m, l=n, \\
                \E[u_k^2 u_l^2] = p^{-1}(p+2)^{-1}, & k=n, l=m, \\
                0, & \text{otherwise}.
            \end{cases}
    \end{align*}
    Hence, we have
    \begin{equation}\label{eq:secondterm}
        \sum_{k\neq l} \Lambda_{kl} \E[ u_k u_l \u\u^\top]
        = 2 p^{-1}(p+2)^{-1} (\shape - \diag(\shape))
    \end{equation}
    and the result follows by substituting \eqref{eq:firstterm} and
    \eqref{eq:secondterm} into \eqref{eq:wholething}.
\end{proof}

Note that,~\eqref{eq:firstterm},~\eqref{eq:secondterm},
and~\eqref{eq:wholething} are valid more generally for any positive semidefinite
symmetric matrix, not only for the shape matrix. 

We are ready to prove Theorem~\ref{thm:sign}. 
Since for any $x>0$, $x^{-1} \geq 2 - x$, we have
\begin{align*}
    \SSCMshapepop 
        &= p \E\left[\frac{\x\x^\top}{\|\x\|^2} \right]
        = p\E\Bigg[\frac{\shape^{1/2}\u\u^\top\shape^{1/2}}{\u^\top
        \shape \u}\Bigg]
        \\
        &\succeq
        p\E\left[\shape^{1/2}\u\u^\top\shape^{1/2} (2 - \u^\top \shape \u) \right]
        \\
        &=
        2p\E[ \shape^{1/2}\u\u^\top\shape^{1/2} ]
        - p\E[\shape^{1/2}\u\u^\top\shape^{1/2}\u^\top \shape \u]
        \\
        &=
        2 \shape - \frac{1}{p+2}\left( \tr(\shape)\shape + 2\shape^2
        \right)
        =
        \shape - \frac{2(\shape^2 - \shape)}{p+2}.
        \numberthis
        \label{eq:approximation}
\end{align*}
By scaling both sides of~\eqref{eq:approximation} by $\epsilon =
\|\shape\|_\Fro^{-1} = 1/\sqrt{p\gamma}$, the first term will have unity norm
($\|\epsilon \shape\|_\Fro = 1$). Let us then consider the second term on the
right-hand side. Its trace is
\begin{align*}
    \epsilon \frac{2\tr(\shape^2 - \shape)}{p+2}
        &= \epsilon \frac{2p}{p+2}(\gamma-1) 
        \\
        &\leq 2\left(\sqrt{\frac{\gamma}{p}} - \sqrt{\frac{1}{p\gamma}}\right)
        = O \left (\sqrt{\frac{\gamma}{p}} \right),
        \numberthis
        \label{eq:tr2term}
\end{align*}
and its norm is
\begin{align*}
    \epsilon \left\| \frac{2(\shape^2 - \shape)}{p+2} \right\|_\Fro
        &\leq (2\epsilon/p) ( \|\shape^2\|_\Fro + \|\shape\|_\Fro )
        \\
        &\leq
        (2/p) (\|\shape\|_\Fro + 1)
        = O(\sqrt{\gamma/p}).
        \numberthis
        \label{eq:norm2term}
\end{align*}
Here, we used the triangle inequality and submultiplicativity properties of the
Frobenius norm. By moving all the terms of~\eqref{eq:approximation} to the
left-hand side and scaling them by $\epsilon$, we can use \eqref{eq:tr2term} and
\eqref{eq:norm2term} as well as the property that for any $\A \succeq \zero$,
$\tr(\A) \geq \|\A\|_\Fro$ to get
\begin{equation*}
    O(\sqrt{\gamma/p}) \geq
    \left\|
    \left(\SSCMshapepop - \shape\right)/\|\shape\|_\Fro
    + O(\sqrt{\gamma/p})
    \right\|_\Fro
    \geq
    0,
\end{equation*}
which implies that $\SSCMshapepop = \shape + O(\gamma)$. Furthermore, if
$\gamma=o(p)$, then $O(\sqrt{\gamma/p}) = o(1)$ implying $\SSCMshapepop = \shape
+ o(\|\shape\|_\Fro)$.
\qed

\section{Proofs of Lemma 1 and Proposition 3}\label{app:theoreticalsphericity}
Let $\lambda_1 = \lambdamax(\shape)$. Assume that as $p \to \infty$, $\lambda_1
= O(p^{\tau/2})$, where $\tau<1$. Then 
\begin{equation*}
    \gamma = \frac{\tr(\shape^2)}{p} \leq \frac{p\lambda_1^2}{p} 
    = \lambda_1^2 = O(p^\tau) = o(p).
\end{equation*}
Particularly, $\gamma = o(p)$ if $\lambda_1 = O(1)$.

In the following, we assume that $\varrho$ is a fixed parameter that does not
depend on the dimension $p$. We also use $\gamma = \tr(\shape^2)/p = (1/p)
\sum_{i,j} \Lambda_{ij}^2$, where $\shape = (\Lambda_{ij})$.

\subsection{Sphericity of the AR(1) covariance matrix}\label{app:AR1}
The shape matrix of the AR(1) covariance matrix with parameter $\varrho$
($|\varrho|<1$) has $p$ number of ones on the main diagonal, $2(p-1)$ number of
$\varrho$ on the first diagonals above and below the main diagonal, and $2(p-2)$
number of $\varrho^2$ on the second diagonals above and below the main diagonal,
and so on. That gives,
\begin{equation*}
    \gamma _{\text{AR(1)}}
    = 1 + \frac{2}{p} \sum_{i=1}^{p-1} (p-i) \varrho^{2i}
    = 2\sum_{i=0}^{p-1} (\varrho^{2})^i - \frac{2}{p}\sum_{i=1}^{p-1}
    i(\varrho^{2})^i - 1.
\end{equation*}
The first sum is the geometric series and the second sum is also well-known
and its solution can be obtained by differentiating the geometric series.
Hence, we get
\begin{align*}
    \gamma _{\text{AR(1)}}
        &=
        2 \frac{1-(\varrho^2)^p}{1-\varrho^2}
        - \frac{2}{p} \frac{(p-1)(\varrho^2)^{p+1} - p(\varrho^2)^p +
        \varrho^2}{(1-\varrho^2)^2} - 1
        \\
        &= \frac{p - p \varrho^4 - 2 \varrho^2 + 2 (\varrho^2)^{p+1}}
        {p (\varrho^2-1)^2}.
\end{align*}
As $p \to \infty$, $\gamma_{\text{AR(1)}} \to (1+\varrho^2)/(1-\varrho^2) = O(1)
= o(p)$.

\subsection{Sphericity of the 1-Banded Toeplitz covariance matrix}
The shape matrix corresponding to a 1-banded Toeplitz covariance matrix has $p$
number of ones on the main diagonal and $2(p-1)$ number of $\varrho$ on the
off-diagonals while rest of the off-diagonals are zero. This implies that
$\gamma _{\text{1B}} = 1 + 2(1-1/p) \varrho^2$. As $p \to \infty$, $\gamma
_{\text{1B}} \to 1 + 2 \varrho^2 = O(1) = o(p)$.

\subsection{Sphericity of the spiked covariance matrix}
Let $\M = \M_r + \alpha \I$, where $\M_r$ has rank $r \leq p$. Let $\alpha =
\beta \eta_r$, where $\eta_r = \tr(\M_r)/p$. Let $\gamma_r = p
\tr(\M_r^2)/\tr(\M_r)^2 = \tr(\M_r^2)/(p\eta_r^2)$. Then, $\tr(\M)^2 =
p^2\eta_r^2(1+\beta)^2$ and
\begin{align*}
    \tr(\M^2) 
        &= \tr\big(\big(\M_r + \beta \eta_r \I\big)^2\big)
        =
        \tr(\M_r^2) + \beta^2 p\eta_r^2 + 2\beta p \eta_r^2
        \\
        &= 
        p \eta_r^2 (\gamma_r + \beta^2 + 2\beta)
        = 
        p \eta_r^2
        (\gamma_r - 1 + (\beta + 1)^2).
\end{align*}
Then by computing $\gamma = p \tr(\M^2) / \tr(\M)^2$, we get
\begin{equation*}
    \gamma = \frac{\gamma_r - 1}{(\beta+1)^2} + 1
    = \frac{\gamma_r - 1}{(\alpha/\eta_r + 1)^2} + 1
    \leq
    \frac{\eta_r^2 \gamma_r}{\alpha^2}  + 1 
    \leq \frac{r \lambda_1^2}{p \alpha^2} + 1,
\end{equation*}
where we used that the rank of $\M_r$ is $r$ and $\eta_r^2 \gamma_r =
\tr(\M_r^2) / p \leq r \lambda_1^2/p$, where $\lambda_1 = \lambdamax(\M_r)$.
Therefore, $\gamma = o(p)$, if $\frac{r \lambda_1^2}{p \alpha^2} = o(p)$,
which includes the cases 
\begin{equation*}
    \begin{cases}
        \frac{\lambda_1}{\alpha} = O(p^{\tau/2}),\\
        \frac{\lambda_1}{\alpha} = O(p^{(\tau+1)/2})
        \text{ and } r = O(1),
    \end{cases}
    \quad \text{where}~\tau<1.
\end{equation*}

\subsection{Sphericity of the CS covariance matrix}
The shape matrix of the CS covariance matrix has $p$ number of ones on the
main diagonal and $p(p-1)$ number of $\varrho$ on the off-diagonals. Thus,
$\gamma_{\text{CS}} = 1 + (p-1) \varrho^2 = O(p).$

\qed

\section{Proof of Proposition~\ref{prop:SSCMminusSCM}}\label{app:SSCMminusSCM}
\blue{
    Expanding the mean squared distance gives
    \begin{equation}
        \E[\|\SSCMshape - \shape_{\text{SCM}}\|_\Fro^2]
        =
        \E[\tr(\SSCMshape^2)] + \E[\tr(\shape_{\text{SCM}}^2)]
        - 2\E[\tr(\SSCMshape \shape_{\text{SCM}})].
        \label{eq:SSCMminusSCM}
    \end{equation}
    The first term is
    \begin{align*}
        \E[\tr(\SSCMshape^2)]
        &=
        \E\Big[
            \tr\Big(\Big(
                    \frac{p}{n}\sum_i \frac{\x_i\x_i^\top}{\|\x_i\|^2}
                \Big)^2
        \Big)\Big]
        \\
        &=
        \frac{p^2}{n^2}
        \E\Big[
            \tr\Big(
                \sum_i \frac{\x_i\x_i^\top \x_i\x_i^\top}{\|\x_i\|^4}
                +
                \sum_{i\neq j} \frac{\x_i\x_i^\top \x_j\x_j^\top}
                {\|\x_i\|^2\|\x_j\|^2}
        \Big)\Big]
        \\
        &=
        \frac{p^2}{n} + \frac{n-1}{n}
        \tr(\E[\SSCMshape]^2).
        \numberthis
        \label{eq:term1}
    \end{align*}
    Now, let $\S = \frac{1}{n} \sum_{i=1}^n \x_i\x_i^\top$.  Since $\S$ is
    unbiased, using Wishart theory, we have $\mathrm{MSE}(\S) =
    \tr(\var(\vec(\S))) = (1/n)\tr((\I + \mathbf K)(\M \otimes \M)) =
    (1/n)(\tr(\M)^2 + \tr(\M^2))$, where $\mathbf K = \sum_{i=1}^p\sum_{j=1}^p
    \e_i \e_j^\top \otimes \e_j \e_i^\top$ is the commutation
    matrix~\cite{magnusCommutationMatrixProperties1979} and $\otimes$ denotes
    the Kronecker product. Using that $\E[\tr(\S^2)]  = \mathrm{MSE}(\S) +
    \tr(\M^2)$,
    \begin{align*}
        &\E[\tr(\shape_{\text{SCM}}^2)]
        =
        \frac{p^2}{\tr(\M)^2} \E[\tr(\S^2)]
        =
        \frac{p^2}{n} + \Big(\frac{n+1}{n}\Big) \tr(\shape^2).
        \numberthis
        \label{eq:term2}
    \end{align*}
    The last term of~\eqref{eq:SSCMminusSCM} is
    \begin{align*}
        &\E[\tr(\SSCMshape \shape_{\text{SCM}})]
        =
        \E\Big[ \tr\Big(
                \frac{p}{n}\sum_i \frac{\x_i\x_i^\top}{\|\x_i\|^2} \frac{p}{n\tr(\M)}
                \sum_j \x_j\x_j^\top
        \Big)\Big]
        \\
        &=
        \frac{p^2}{n^2\tr(\M)}\tr\Big( \sum_i \E[\x_i\x_i^\top]
            +
            \sum_{i\neq j} \E\Big[ \frac{\x_i\x_i^\top}{\|\x_i\|^2}\Big]
            \E[\x_j\x_j^\top]
        \Big)
        \\
        &=
        \frac{p^2}{n} + \frac{n-1}{n} \tr(\E[\SSCMshape] \shape).
        \numberthis
        \label{eq:term3}
    \end{align*}
    Now, substituting~\eqref{eq:term1},~\eqref{eq:term2}, and~\eqref{eq:term3}
    into~\eqref{eq:SSCMminusSCM} gives
    \begin{align*}
        \frac{n-1}{n} \big( \tr(\E[\SSCMshape]^2) - 2\tr(\E[\SSCMshape] \shape)
        \big) + \frac{n+1}{n}\tr(\shape^2).
    \end{align*}
    Dividing this by $\|\shape\|_\Fro^2$ and applying assumption $\gamma = o(p)$ and
    Theorem~\ref{thm:sign}, we get the result
    $\E[\|\SSCMshape - \shape_{\text{SCM}}\|_\Fro^2]/\|\shape\|_\Fro^2
    \overset{p\to \infty}\longrightarrow
    \frac{2}{n}$.
}
\qed

\renewcommand*{\bibfont}{\scriptsize}
\printbibliography % when using biblatex

\end{document}